\newtheorem{definition}{Definition}
\newtheorem{lemma}{Lemma}
\newtheorem{theorem}{Theorem}
\newtheorem{corollary}{Corollary}
\newcommand{\remove}[1]{}
\def \demand {demand}
\def \radius {radius around}
\def \G {{{G}}}
\def \H {{{H}}}
\def \t {r}
\def\tw{{\tt{tw}}}
\def\nd{{\tt{nd}}}
\def\cw{{\tt{cw}}}
\def\mw{{\tt{mw}}}
\def\dnd{{\tt{itp}}}
\def\vc{{\tt{vc}}}
\def \D {{\sc Domination}}
\def\H1{{H^{(1)}}}
\def\Hi1{{H^{(i-1)}}}
\def\mc{\texttt{mc}}
\def\K{\mathcal{K}}
\def\I{\mathcal{I}}
\def\G{\mathcal{G} }
\def\M{\mathcal{M} }
\newcommand{\Ni}[2]{N_{#1}(#2)}
\newcommand\xit[2]{x_{#1,#2}}
\title{Parameterized Complexity of $(d,r)$-Domination via Modular Decomposition}
\date{September 9, 2024}
\date{}
\newif\ifuniqueAffiliation
\author{ 
\href{https://orcid.org/0000-0000-0000-0000}{
\hspace{1mm}Gennaro Cordasco
\\
Department of Psychology, \\
University of Campania ''L.Vanvitelli'', Italy. \\ \texttt{gennaro.cordasco@unicampania.it}\\
	\And
	\href{https://orcid.org/0000-0000-0000-0000}
    Luisa Gargano, Adele A. Rescigno \\
	Department of Computer Science, \\
    University of Salerno, Italy.\\
	\texttt{\{lgargano,arescigno\}@unisa.it} \\

}
\else
\usepackage{authblk}

\setlength{\affilsep}{0em}
\newbox{\orcid}\sbox{\orcid}{\includegraphics[scale=0.06]{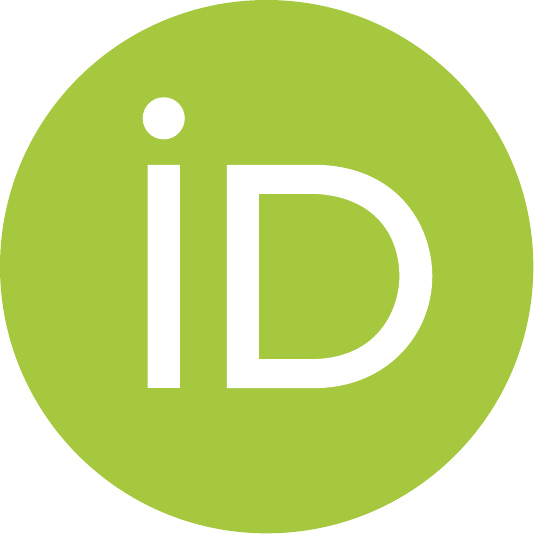}} 
\author[1]{%
	\href{https://orcid.org/0000-0001-9148-9769}
   {\usebox{\orcid}\hspace{1mm}Gennaro Cordasco
   }%
}
\author[2]{%
	\href{https://orcid.org/0000-0003-3459-1075}{\usebox{\orcid}\hspace{1mm}Luisa Gargano\thanks{\texttt{
Work  partially supported by project SERICS (PE00000014)
under the MUR National Recovery and Resilience Plan funded by the
European Union - NextGenerationEU.}}}%
}
\author[2]{%
	\href{https://orcid.org/0000-0001-9124-610X}{\usebox{\orcid}\hspace{1mm}Adele A. Rescigno$^*$}%
}
\affil[1]{Department of Psychology, 
University of Campania ''L.Vanvitelli'', Italy, {\tt{gennaro.cordasco@unicampania.it}}}
\affil[2]{Department of Computer Science, 
    University of Salerno, Italy, {\tt{lgargano@unisa.it}}}
    \affil[3]{Department of Computer Science, 
    University of Salerno, Italy, {\tt{arescigno@unisa.it}}}


\hypersetup{
pdftitle={Parameterized Complexity of (d,r)-Domination via Modular Decomposition},
pdfsubject={},
pdfauthor={Gennaro Cordasco, Luisa Gargano and Adele A. Rescigno},
pdfkeywords={Parameterized Complexity, Kernelization Algorithms, Domination, Contamination minimization},
}

\baselineskip=0.8truecm
\begin{document}
\maketitle

\begin{abstract}
	With the rise of social media, misinformation has significant
negative effects on the decision-making of individuals, organizations and
communities within society. Identifying and mitigating the spread of fake information is a challenging issue. We consider a generalization of the Domination problem that can be used to detect a set of individuals who, through an awareness process, can prevent the spreading of fake narratives. The considered problem, named \textsc{$(d,r)$-Domination} generalizes both distance and multiple domination. We study the parameterized complexity of the problem according to standard and structural parameters.  
We give fixed-parameter algorithms as well as polynomial compressions/kernelizations for some variants of the problem and parameter combinations. 
\end{abstract}

\baselineskip=0.7truecm

\section{Introduction} 
Domination is a fundamental concept in graph theory, which deals with the idea of dominating sets within a graph. In this problem, you seek to find the smallest set of vertices in a graph in such a way that every vertex in the graph is either in the dominating set or adjacent to a vertex in the dominating set. Dominating sets are critical in various real-world applications across fields that involve networks, connections, and coverage \cite{HHS1,HHS2}.
 
The Domination Problem in graph theory has several important variants that focus on different aspects of the problem. We focus on generalizations to distance domination and multiple domination that provide ways to solve practical problems related to the physical or operational constraints of networks. 
 
One application that inspires this paper is epidemiology. Graph-based information dissemination algorithms can be used to study the spread of real and fake information in a network. While these algorithms are not designed to intercept fakes, they can be used as part of a broader strategy to identify and mitigate the spread of fakes.
Controlling the spread of fake news is an ongoing challenge. Strategies to reduce the size of propagation by immunizing/removing vertices have been investigated in several papers \cite{barabasi,newman}.
In this paper, we focus on the effective production of accurate and evidence-based information to combat misinformation. To this end, we study the ($d,r$)-domination problem \cite{BHS}, which includes both multiplicity and distance.

{Apart from our inspiring application the ($d, r$)-domination problem, as the classical domination problem, has many applications. For instance it can be used, from the other side, as a mechanism to identify key influencers or individuals who can drive (mis)information diffusion within a network \cite{CCGMV,SNAM,CGMRV,CGRV,active,CGL+}. 
In wireless sensor networks, it helps minimize the number of sensors while ensuring full coverage and connectivity \cite{DB97}. In facility location and placement, it helps determine the optimal locations for services, facilities, or resources to ensure that they are accessible to a maximum number of people while minimizing the number of locations needed \cite{HHS98}.}

\paragraph*{The problem.}
Let $G=(V,E)$ be a graph. For a set of vertices  $X\subseteq V$,  we denote by $G[X]$  the induced subgraph of $G$ generated by $X$. 
Moreover, for  $v\in V$,  by\footnote{We will use the subscript $G$ whenever the graph is not clear from the context.} $\Ni{}{v}=\{u\in V \mid  (u,v)\in E\}$ the 
neighborhood of $v$.   
This notation can also be extended to  a set  $X\subseteq V$, we let
 $\Ni{}{X}=\{ y\in V\setminus X \mid  \exists x\in X, (x,y)\in E\}$ be the set of  vertices outside $X$  that have a neighbor in $X$.
For any two vertices  $u,v\in V$, we denote by $\delta(u,v)$ the distance between  $u$ and $v$ in $G$ 
and  by   $\Ni{\t}{v}=\{u \in V \mid u \neq v, \ \delta(u, v)\leq \t\}  $
the {\em neighborhood of radius $\t$ around $v$}.  Clearly, $\Ni{1}{v} = \Ni{}{v}$.    

A {\em Dominating set} in a graph $G = (V, E )$ is a subset of  $V$ such that every vertex not in the set has at least one neighbor in the set. 
In this paper, we will consider the following generalization  introduced in \cite{BHS}:
Given a {\em \demand} $d$ and a {\em \radius} $r$, a vertex $v \in V \setminus S$ is {\em dominated by $S$} if there exist at least $d$ vertices in $S$ that are at a distance at most $r$ from $v$.

In an information diffusion setting,
we consider a networked population of people who could be misled by a word-of-mouth dissemination approach. It can be assumed that an individual gets immune when exposed to enough debunking information. For further clarification, an individual is considered immunized when he/she receives the debunking information from a number of neighbors at least equal to its demand $d$.
Furthermore, every individual has a circle of trust (level of trust) defined by a radius $r$ around it. Debunking information from inside the circle of trust is the only source that can be trusted.

Specifically, we examine how to use $(d, r)$-dominating sets to identify a collection of individuals who can stop the spread of false narratives by initiating a debunking (or prebunking) immunization campaign. The vaccination operation on a vertex prevents the contamination of the vertex itself as well as the spread of such false narratives when there is a debunking  immunization campaign in place.
Therefore, to prevent the spread of malicious items, we are looking for a small subset S of vertices (also known as the immunizing set) which allows us to minimize the propagation of false information by disseminating the debunking information. Depending on the vertex demand $d$ and the radius $r$, which characterizes the vertices' circle of trust, the immunizing set should be able to cover each vertex $d$ times within a maximum distance of $r$.
\begin{definition}
Given an undirected graph $G=(V,E)$  and integers   $r,\, d>0$, a subset of vertices $S \subseteq V$ is  a  $(d,r)$-{\em dominating set} of $G$ if \\ $|N_r(v)\cap S|\geq d$,   for all $v \in V\setminus S$.
\end{definition}

In this paper, we study the $(d,r)$-\D\  problem that asks for a minimum $(d,r)$-dominating set.  
\begin{quote}
\textsc{$(d,r)$-Domination}:\\
\textbf{Input:} An  undirected graph $G{=}(V,E)$ and  two integers   $r,\, d{>}0$ \\   
\textbf{Output:} A $(d,r)$-dominating  set   of minimum size.
\end{quote}

{
Since the problem can be solved independently in each connected component of the input graph, from now on, we assume that the input graph is connected.}

\subsection{Parameterized Complexity and Compression/Kernelization}  
We study the $(d,r)$-\D\ problem from a parameterized point of view.
Parameterized complexity is a refinement of classical complexity theory in which one takes into account not
only the input size but also other aspects of the problem given by a parameter $p$.
A problem $Q$ with input size $n$ and parameter $p$ is called {\em fixed-parameter tractable (FPT)} if it can be solved in time $f(p) \cdot n^c$, where $f$ is a computable function only depending on $p$ and $c$ is a constant.
Downey and Fellows introduced a hierarchy of
parameterized complexity classes $FPT \subseteq W[1] \subseteq W[2] \subseteq \ldots$ that is believed to be strict, see \cite{DF99,FG}.

Once fixed-parameter tractability of a problem is established, we can
ask whether we can go even one step further by showing the existence of a polynomial compression/kernelization algorithm.

\def\PK{{PK}}
\def\PC{{PC}}

\begin{definition}
A compression 
for a parameterized problem $Q$ is a polynomial-time algorithm, which takes an instance  $(I,p)$ and produces an instance  $( I',p')$ called a kernel, such that    $(I,p)\in Q$ iff $(I',p') \in Q'$ for another problem $Q'$ 
and the size $|I'|+p'$  is bounded by $f(p)$ for some computable function $f$, called the size of the kernel. If $f$ is a polynomial function, the algorithm is called polynomial compression, denoted \PC. \\
If $Q=Q'$ then the algorithm is called kernelization. Moreover, if $Q=Q'$ and $f$ is a polynomial function then the algorithm is called polynomial kernelization, denoted \PK. 
\end{definition}

Kernelization can be seen as a special case of compression when the algorithm is required to 
produce another instance of the same problem.
{
Moreover by Theorem 1.6 in \cite{Fomin_Book2019}, we know that
if there is a polynomial compression of a problem $Q$ into another problem $Q'$ and the decision versions of $Q$ and $Q'$ are both NPC, then Q also admits a polynomial kernel.
The reason is that any two NPC problems have a polynomial-time reduction between them. So, for two NPC problems, polynomial compression and polynomial kernel are equivalent, if we do not consider kernel size. 
In this paper we will treat the two concepts separately in order to emphasize the fact that a kernelization is often a straightforward preprocessing strategy in which inputs to the algorithm are replaced by a smaller input, achieved by applying a set of reduction rules that cut away parts of the instance that are easy to handle.
}
It is known that for decidable problems, fixed-parameter tractability is equivalent to the existence of a kernelization algorithm, but not
necessarily a \PK.

\paragraph*{Related results.}

The  $(d,r)$-\D\ problem   was proved to be NP-complete even when restricted to bipartite graphs and chordal
graphs \cite{BHS}.
Bounds on the minimum size of a solution of $(d,r)$-\D\ are presented in \cite{JPSS}.
Subsequent results for some special classes of graphs are given in \cite{H,LHX}.
 $(d,r)$-\D\   generalizes several well-known, and widely studied, generalized domination problems.
 \begin{itemize} 
\item[-] When  $d=r=1$ the problem becomes the classic {\sc Dominating Set} problem.
\item[-] When $r=1$ and $d\geq 2$, the problem becomes the $d$-{\sc Domination} problem \cite{FKR23,Flink85}. For any $d$, this problem is known to remain NP-hard in the classes of split graphs.  From a positive point of view, it  is    solvable in linear time in the class of graphs having the property that   any block  is a clique, a cycle or a complete bipartite graph (this includes trees, block graphs, cacti, and block-cactus graphs),
and it is solvable in polynomial time in the class of chordal bipartite graphs  \cite{LC}.
The $d$-\D\ problem was also studied with respect to its (in)approximability properties, both in general \cite{CMV} and in restricted graph classes \cite{BFH}.
\item[-]When $d=1$ and $r\geq 2$, the problem becomes the $r$-{\sc Distance Domination} problem \cite{Henning20}. This problem, for any  $r$, remains NP-hard even if we restrict the graph to belong to certain special classes of graphs, including bipartite or chordal graphs \cite{CN}.
Slater presented a linear time algorithm for the $r$-{\sc Distance Domination} problem in forests \cite{S}.
An approximate algorithm for unit disk graphs was presented in \cite{DLDHR}.  
 \end{itemize}
From a parameterized complexity point of view, it is known that the $(d,r)$-\D\ problem, as well as the $d$-\D\ problem and the 
$r$-{\sc Distance Domination} problem are W[2]-hard with respect to the size $k$ of the solution since all these problems have the {\sc Dominating Set} problem as special case \cite{DF92}.
Moreover, since $(d,1)$-\D\ remains NP-hard for each $d\geq 2$ we have that $(d,r)$-\D\ is para-NP-hard with respect to $d$ and  W[2]-hard with respect to $d+k$, even when the radius $r$ is equal to $1$.

For graphs of bounded branchwidth, it  was presented in \cite{IOU16} an algorithm for the $d$-\D\ problem  
 parameterized by $d$ and the branchwidth {\tt bw}$(G)$ of graph $G$. Due to the linear relation between the branchwidth {\tt bw}$(G)$ and the treewidth $\tw(G)$ of graph $G$, the above result allows to obtain an algorithm for the $d$-\D\ problem  
 parameterized by $d$ and the treewidth $\tw(G)$. 
 An XP algorithm for the  $d$-\D\ problem parameterized by cliquewidth $\cw(G)$ of graph $G$ was presented in \cite{CCGMV}.
For the $r$-{\sc Distance Domination} problem, an algorithm parameterized by $r$ and the treewidth $\tw(G)$ was presented in \cite{BL16}.

In this paper, we are interested in
the analysis of  $(d,r)$-\D\  with respect to some structural parameters of the input graph $G$: The neighborhood diversity  $\nd(G)$, the iterated type partition $\dnd(G)$, and the modular-width $\mw(G)$ of  $G$.
FPT algorithms for   {\sc Dominating Set}  for  $\nd(G)$, $\dnd(G)$, and $\mw(G)$  parameters were presented in \cite{K}, \cite{DAM24}, and \cite{R}, respectively. 
Recently, Lafond and Luo \cite{LL} presented an FPT algorithm for  
$d$-\D\   
 parameterized by $\nd(G)$ and proved that this problem is W[1]-hard with respect to $\dnd(G)$.

The existence of \PK s for the  {\sc Dominating Set} and $r$-{\sc Distance Domination} problems with respect to the size $k$ of the solution in special classes of graphs has also been studied  \cite{DDFLPPVSSS16,ER,FLST}. No \PK\ for the {\sc Dominating Set} problem exists when parameterized by the solution size and the vertex cover number $\vc(G)$ \cite{DLS}. A  \PK\ for the {\sc Dominating Set} problem parameterized by $\dnd$, and consequently by $\nd$,
has been provided in \cite{DAM24}.

\paragraph*{Our results.}
In this paper, we give some 
results with respect to 
some standard and  structural parameters of 
$G$: the demand $d$, the modular-width $\mw(G)$, the iterated type partition number 
$\dnd(G)$ and the neighborhood diversity $\nd(G)$.

We present: 
\begin{itemize}
\item a FPT algorithm for $(d,r)$-\D\ 
parameterized by  
$\mw(G)+d$. 
\item a \PC\ for $(1,r)$-\D\ 
parameterized by $\mw(G)$  and, consequently, by $\dnd(G)$ and $\nd(G)$.
\item a \PC\ for $(d,r)$-\D\ parameterized by  $\dnd(G)+d$.
\item  a \PK\ for $(d,r)$-\D\ parameterized by $\nd(G)+d$.  
\end{itemize}
Table \ref{tableResults} gives a summary of known and new parameterized complexity results with respect to the considered parameters.

\begin{table}[tb!]
\begin{adjustbox}{max width=\textwidth}
\begin{tabular}{|l||c|c|c|c|}
\hline
Parameters                     & $(1,1)$-\D & $(1,r)$-\D  &  $(d,1)$-\D  & $(d,r)$-\D \\ 
\hline \hline
\nd              &  \PK\cite{DAM24}   &  {\bf \PC}[Th.\ref{1rDmw}]    & FPT\cite{LL}&   {\bf FPT}[Cor.\ref{cor9}] 
\\ \hline
\dnd            &  \PK\cite{DAM24}
& {\bf \PC}[Th.\ref{1rDmw}]  & W{[}1{]}-hard\cite{LL} &    W{[}1{]}-hard\cite{LL} 
\\ \hline
\mw             &  FPT\cite{R}   &  &W{[}1{]}-hard\cite{LL} &     W{[}1{]}-hard\cite{LL} \\ 
            &{\bf \PC}[Th.\ref{th7}]&{\bf \PC}[Th.\ref{1rDmw}]&  &      \\ \hline \hline
\nd\  and $d$     &   -    &         -     &  {\bf \PK}[Th.\ref{thnd}]&  {\bf \PK}[Th.\ref{thnd}]  \\ \hline
 \dnd\  and $d$     &   -    &         -     &  {\bf \PC}[Lem.\ref{th15}]&  {\bf \PC}[Th.\ref{cor16}]  \\ \hline 
 \mw\  and $d$     &   -    &         -     &  {\bf FPT }[Lem.\ref{th11}]&  {\bf FPT }[Th.\ref{cor12}]  \\ \hline 
\end{tabular}
\end{adjustbox}
\caption{Parameterized complexity 
with respect to  demand ($d$), neighborhood diversity (\nd),
iterated type partition number (\dnd) and
modular-width (\mw). \label{tableResults} 
} 
\end{table}

\section{Modular Decomposition} \label{mddef} 
The notion of modular decomposition of graphs was introduced by Gallai
 in \cite{Gallai}, as a tool to define hierarchical decompositions of graphs.

\begin{definition} \label{operationDef}
The  following operations can be used to construct a graph:
\begin{quote}
\begin{itemize}  
\item[(O1)] { The creation of an isolated vertex.}
\item[(O2)] {The disjoint union of two graphs denoted by $G_1\oplus G_2$, i.e., $G_1\oplus G_2$ is the graph with vertex set $V (G_1 ) \cup V ( G_2 )$ and edge set $E( G_1 )\cup E( G_2 )$. }
\item[(O3)] {The complete join of two graphs denoted by $G_1\otimes G_2$, i.e., $G_1\otimes G_2$ has vertex set $V (G_1 ) \cup V ( G_2 )$ and edge set 
$E( G_1 )\cup E( G_2 )\cup \{ (u, w) \mid  u \in V ( G_1 ),\  w\in  V(G_2) \}$.} 
\item[(O4)] 
The substitution of the vertices
$1,\ldots,\ell$ of an  outline graph $H$ by the graphs 
 $G_1,\ldots,G_\ell$, denoted by $H(G_1,\ldots,G_\ell)$, is the graph with vertices $\bigcup_{1\leq i\leq \ell} V(G_i)$ and edge set
$\bigcup_{{1\leq i\leq \ell}} E(G_i)\cup \{(u, w) \mid u \in G_i, \ w \in G_j, \  (i, j) \in E(H), \ 1 \leq i \neq j \leq \ell\}.$ 
\end{itemize}
\end{quote}
\end{definition} 
\noindent
Notice that (O2) and (O3) are special cases of (O4) with $\ell=2$. 

A {\em module} of a graph  $G = (V, E)$  is a set $M \subseteq V$ such that for all
$ u,v\in M, \  N(u)\setminus M=N(v)\setminus M.$
Two modules $M$ and $M'$ are adjacent if, in $G$,  every vertex of $M$ is adjacent to every vertex of $M'$, and independent if no vertex of $M$ is adjacent to a vertex of $M'$.
The empty set, the vertex set $V$, and all singletons $\{v\}$ for each $v\in V$ are always modules and they are called {\em trivial modules}. A graph is called {\em prime} if all of its modules are trivial.
A partition $\M=\{M_1,M_2,\ldots, M_\ell\}$ of $V$ is called a {\em modular partition} of $G$ if every element of $\M$ is a  module of $G$.
Hence, the modular partition of $G=G_1\oplus G_2$ and $G=G_1\otimes G_2$ in operations (O2)-(O3) is $\M=\{M_1,M_2\}$ where $G_1=G[M_1]$ and $G_2=G[M_2]$ and the modular partition of $G=H(G_1,\ldots,G_\ell)$ in operation (O4) is $\M=\{M_1,\ldots,M_\ell\}$ where for each\footnote{For a positive integer $a$, we use $[a]$ to denote the set of integers $[a] = \{1, 2, \ldots, a\}$.} $i\in [\ell], G_i=G[M_i]$.

\smallskip

Operations (O1)-(O4) taken to construct a graph, form a parse-tree of the graph. A \textit{parse tree} of a graph $G$ is a tree $T(G)$  that captures the decomposition of $G$ into modules.   The  leaves  of $T(G)$  represent  the  vertices  of $G$ (operation (O1)).  The internal vertices of  $T(G)$ capture operations on modules: Disjoint union of its children (operation (O2)), complete join (operation (O3)) and substitution (operation (O4)).     Figure \ref{fig:imgMW} depicts a graph $G$ and the corresponding parse tree.

\paragraph*{Modular-width.}
The Modular-width parameter was introduced in \cite{GLO}.
The \textit{width} of a graph $G$ is the maximum size of the vertex set of an outline graph $H$ used in
operation (O4) to construct $G$, if any, it is 2 otherwise; this number corresponds to the maximum number of children of a vertex in the corresponding parse tree. 
The \textit{modular-width} is the minimum width such that
$G$ can be obtained from some sequence of operations (O1)-(O4). Finding a parse-tree, of
a given graph $G$, corresponding to the modular-width, can be done in linear-time \cite{TCH+}.

\begin{figure}[tb!]
	\centering
	\includegraphics[width=0.97\textwidth]{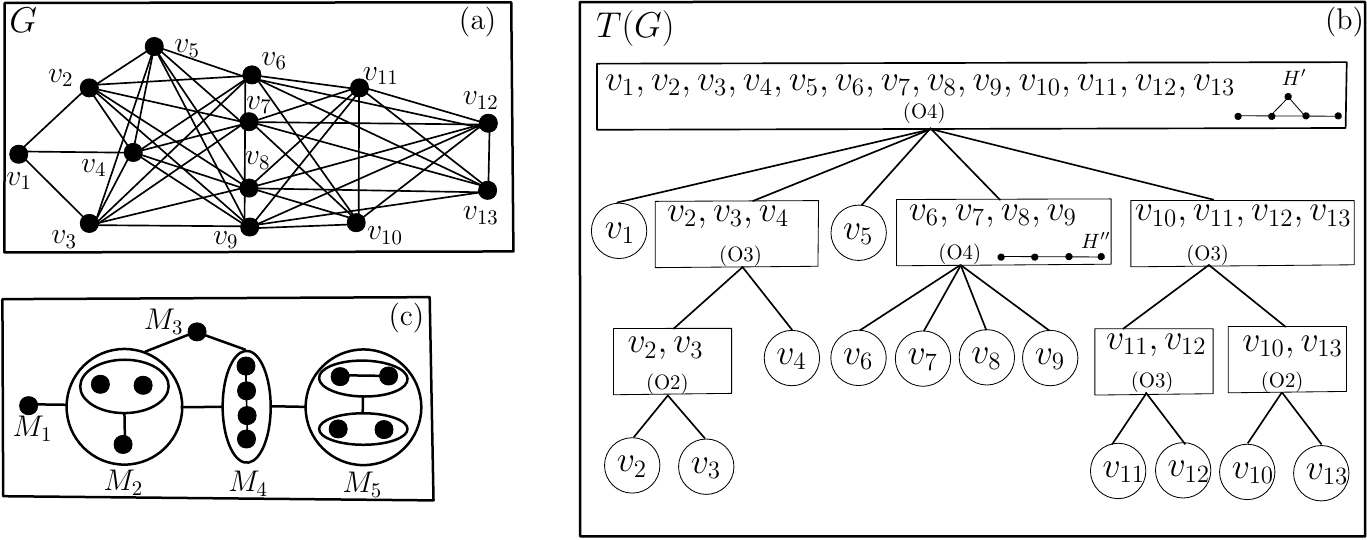}  
		\caption{(a) A graph $G$. (b) The parse tree $T(G)$ associated with a decomposition of $G$ into modules.
The width of the presented decomposition is $5$. 
  (c) A hierarchical representation of the decomposition of $G$ into modules.
  \label{fig:imgMW}}  
\end{figure}

\paragraph*{$\G$-modular cardinality.}
A variant of modular-width restricted to fixed graph classes has been proposed in \cite{LL}.
A graph class $\G$ is a (possibly infinite) set of graphs containing at least one non-empty graph.   
\begin{definition}\cite{LL}. Let $\G$ be a graph class. 
For a   graph $G$ (not necessarily in $\G$), a modular partition $\M = \{M_1,\ldots, M_\ell\}$ of  $G$ is called a $\G$-modular partition if each $G[M_i]$ belongs to $\G$.
The $\G$-modular cardinality of $G$, denoted by $\G$-$\mc(G)$, is the minimum cardinality of a
$\G$-modular partition of $G$.
\end{definition}  
Therefore, a graph $G$ of $\G$-modular cardinality  $\ell\geq 1$  and modules  $M_1,\ldots, M_\ell$, can be seen as $G=H(G_1,\ldots, G_\ell)$, according to substitution operation (O4) (including (O2) or (O3) when $\ell=2$), in which $G_i=G[M_i]\in \G$, for each $i\in [\ell]$. 
If $\ell=|V(G)|$, we say that $G$ is prime with respect to the class $\G$.

\paragraph*{Neighborhood diversity.}
The neighborhood diversity  $\nd(G)$ of a graph $G$ was introduced by Lampis in \cite{L}.
It can be seen as the $(\K \cup \I)${\em-modular cardinality} of $G$, where  $\K$  and   $\I$ denote the class of {\em clique} and of {\em independent set} graphs, respectively.  
Hence, a graph $G$ of neighborhood diversity $\ell\geq 1$  and modules  $M_1,\ldots, M_\ell$, can be seen as $G=H(G_1,\ldots, G_\ell)$, in which $G_i=G[M_i]\in \K \cup \I$, for each $i\in [\ell]$.

\paragraph*{Iterated type partition.}\label{sec-itp}
\def\C{\cal C}
Given a graph $G$, the {\em iterated type partition}  of  $G$, introduced in \cite{DAM24},
is defined by iteratively contracting {\em clique} and  {\em independent set} modules until no more contractions are possible; that is,
the obtained graph is prime with respect to the class $\K \cup \I$.
The iterated type partition number, denoted $\dnd(G)$, is the number of vertices of the obtained prime graph.
An example of a graph $G$ with $\dnd(G)=5$ and its iterative identification is given in Figure \ref{fig:img1}. 
\begin{figure}[tb!]
	\centering
	\includegraphics[width=0.97\textwidth]{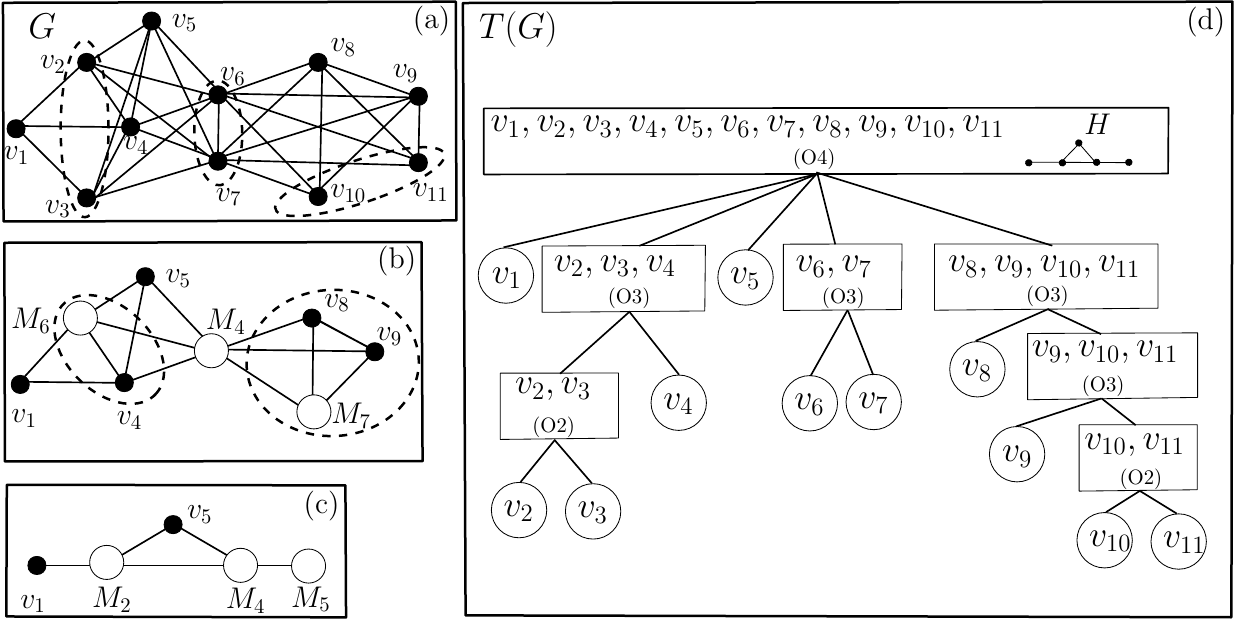}  
		\caption{(a)-(c) A  graph $G$ with {\em iterated type partition} number $5$ and its iterative identification. Dashed circles describe the identified {\em clique} or {\em independent set} modules. (d) The parse tree $T(G)$ associated with a decomposition of $G$ into modules. 
  Apart from the root, all internal vertices use only operations (O2) and (O3).
\label{fig:img1} 
} 
\end{figure}
It can be shown that the vertices of the obtained {\em prime graph} represent
modules that are {\em cographs}.
\remove{
Any cograph may be constructed using previously defined operations (O1), (O2), and (O3). Namely  the class   $\C$  of cographs can be recursively defined as:    
\begin{itemize}
\item any single-vertex graph is a cograph;
\item  $G_1\oplus G_2\in \C$, for any  $G_1, G_2\in \C$. 
\item   $G_1\otimes G_2\in \C$, for any  $G_1, G_2\in \C$.    
\end{itemize}

\noindent
}
Therefore, the {\em iterated type partition} number of a graph is equivalent to its $\C${\em-modular cardinality,} where $\C$ denotes the cograph class.

We know that for each graph $G$ we have $\mw(G) \leq \dnd(G) \leq   \nd(G)$. See also \cite{DAM24} for a discussion on the relations between modular decomposition parameters and other parameters.

{
\subsection{Modular Decomposition on Power Graphs }  
The {\em $r$-th graph power} of a graph $G=(V,E)$, denoted by $G^r$,
 is a graph  that  has the same set of vertices of $G$ and an edge exists between two vertices if and only if their distance in $G$ is at most $r$, that is, $V(G^r)=V$  and 
 $E(G^r)=\{(u,v)\ |\ u,v\in V,\ \delta_G(u,v)\leq r\}$.

\begin{lemma} \label{power} 
Let $G=(V, E)$ be any connected undirected graph. It holds
\begin{enumerate}[label=(\roman*)]   
 \item for every graph class $\G$ such that $\K\subseteq \G,$ $\G$-$\mc(G^r)\leq \G$-$\mc(G)$;
\item  $ \mw(G^r) \leq \mw(G).$
\end{enumerate}    
\end{lemma}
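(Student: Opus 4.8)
The plan is to show that taking the $r$-th power leaves the modular structure of $G$ intact, the only effect being that every \emph{proper} module collapses to a clique. Everything reduces to a single distance-uniformity property of modules, which I would prove first.

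\textbf{Core claim.} \emph{Let $M$ be a module of the connected graph $G$. Then $\delta_G(u,w)=\delta_G(v,w)$ for all $u,v\in M$ and all $w\in V\setminus M$; moreover, if $M\neq V$ then $\delta_G(u,v)\le 2$ for all $u,v\in M$.} For the first part I would take a shortest $u$--$w$ path and let $x$ be its first vertex outside $M$ (it exists since $w\notin M$); its predecessor lies in $M$, so by the module property $x$ is adjacent to \emph{every} vertex of $M$, and rerouting as $v,x,\ldots,w$ gives a walk no longer than the original, hence $\delta_G(v,w)\le\delta_G(u,w)$. Swapping $u$ and $v$ yields equality. For the second part, connectivity together with $M\neq V$ forces a vertex $w\notin M$ adjacent to some vertex of $M$, hence to all of $M$, so $u-w-v$ has length $2$.

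Two consequences are then immediate. First, every module $M$ of $G$ is a module of $G^r$: for $u,v\in M$ and $w\notin M$, $w$ is adjacent to $u$ in $G^r$ iff $\delta_G(u,w)\le r$ iff $\delta_G(v,w)\le r$, i.e.\ iff $w$ is adjacent to $v$. Second, applying the core claim to two distinct modules $M_i,M_j$ of a modular partition shows that $\delta_G(u,v)$ is a constant $d_{ij}$ over all $u\in M_i,\,v\in M_j$; hence in $G^r$ the pair $M_i,M_j$ is either completely joined (if $d_{ij}\le r$) or completely non-adjacent (if $d_{ij}>r$), and any proper module $M$ induces a clique in $G^r$ once $r\ge 2$ (the case $r=1$ being trivial, as $G^1=G$). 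For \textbf{(i)} I would take an optimal $\G$-modular partition $\mathcal{M}=\{M_1,\ldots,M_\ell\}$ of $G$ with $\ell=\G$-$\mc(G)$; when $\ell\ge 2$ each $M_i$ is proper, so $\mathcal{M}$ remains a modular partition of $G^r$ and each $(G^r)[M_i]$ is a clique, which lies in $\G$ because $\K\subseteq\G$, giving $\G$-$\mc(G^r)\le \ell$. For \textbf{(ii)} I would start from a parse tree of $G$ of width $\mw(G)$ and inspect its root operation, which writes $G=H(G_1,\ldots,G_\ell)$ with $2\le \ell\le\mw(G)$ (for $|V|\ge 2$ the root is never a single vertex). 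By the consequences above, $G^r=H'(K_{|M_1|},\ldots,K_{|M_\ell|})$, where $H'$ is the outline graph on the same $\ell$ vertices with an edge $(i,j)$ exactly when $d_{ij}\le r$; each $K_{|M_i|}$ is built by repeated joins (width $2$) and the top operation uses $H'$ on $\ell\le\mw(G)$ vertices, so this parse tree of $G^r$ has width $\max(\ell,2)\le\mw(G)$, proving $\mw(G^r)\le\mw(G)$.

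\textbf{Main obstacle.} The technical heart is the distance-uniformity claim; once it is available both statements are short. The one genuinely delicate point is the degenerate single-part case of (i): when $\G$-$\mc(G)=1$, i.e.\ $G\in\G$, the argument above does not apply and one must separately argue $G^r\in\G$. This holds for the classes relevant here --- $\K$, $\K\cup\I$, and cographs --- since their connected members have diameter at most $2$, so $G^r$ is a complete graph and therefore lies in $\K\subseteq\G$; I would flag this as the step that relies on the specific class rather than on $\K\subseteq\G$ alone, and note that part (ii) avoids the issue entirely because the root of a connected graph's decomposition always splits $V$ into at least two proper modules.
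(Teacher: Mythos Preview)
The paper does not include a proof of this lemma in the provided source, so a direct comparison is not possible; that said, your approach via the distance-uniformity property of modules is the natural one, and both consequences you draw (modules of $G$ remain modules of $G^r$; proper modules become cliques in $G^r$ once $r\ge 2$) are correct and yield a complete proof of part~(ii).

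For part~(i) you are right to flag the single-part case, and the issue is more than a technicality: the statement as written is \emph{false} in full generality. Take $\G$ to be the class of all simple paths together with all cliques (so $\K\subseteq\G$) and let $G=P_5$. Then $\G$-$\mc(G)=1$ since $P_5\in\G$, but $P_5^2$ is neither a path nor a clique and is easily checked to be prime, so $\G$-$\mc(P_5^2)=5>1$. Hence part~(i) requires an additional hypothesis on $\G$ --- for instance that every connected member of $\G$ has diameter at most~$2$, or that $\G$ is closed under powers of connected members --- which is satisfied by $\K$, $\K\cup\I$, and the class of cographs, i.e.\ by every class the paper actually uses (for $\nd$, $\itp$, and their downstream applications). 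Your proof is therefore as complete as the stated lemma allows; the gap you isolated lies in the lemma's formulation, not in your argument, and your proposed side condition is exactly what is needed to repair it.
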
   

Leveraging Lemma \ref{power},  we establish the following result.

\begin{theorem} \label{thGeneral}
For any $d,r>0$, if the $(d,1)$-\D\ problem is FPT  parameterized by  \mw\  (resp. $\G$-$\mc(G)$   with  $\K\subseteq \G$), then the $(d,r)$-\D\ problem is FPT  parameterized by  \mw\  (resp. $\G$-$\mc(G)$   with  $\K\subseteq \G$).
\end{theorem}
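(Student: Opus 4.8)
The plan is to reduce $(d,r)$-\D\ on $G$ to $(d,1)$-\D\ on the $r$-th power graph $G^r$, and then combine the assumed FPT algorithm with Lemma \ref{power} to keep the structural parameter bounded. The demand $d$ is carried through the reduction unchanged, so whatever parameterization makes $(d,1)$-\D\ tractable transfers verbatim.

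First I would establish the key equivalence: a set $S\subseteq V$ is a $(d,r)$-dominating set of $G$ if and only if $S$ is a $(d,1)$-dominating set of $G^r$. This is immediate from the definitions, since in the power graph the ordinary neighborhood of a vertex coincides with its radius-$r$ neighborhood in $G$, i.e. $\Ni{G^r}{v}=\Ni{r}{v}$. Hence the defining condition $|\Ni{r}{v}\cap S|\geq d$ for all $v\in V\setminus S$ of a $(d,r)$-dominating set of $G$ is exactly the condition $|\Ni{G^r}{v}\cap S|\geq d$ for all $v\in V\setminus S$ defining a $(d,1)$-dominating set of $G^r$. In particular the two problems share the same feasible solutions and the same optimum value.

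Next I would spell out the algorithm. Given an instance $(G,r,d)$, build $G^r$ in polynomial time (e.g. compute all pairwise distances and join every pair at distance at most $r$), and run the assumed FPT algorithm for $(d,1)$-\D\ on $(G^r,d)$. By the equivalence above, its output is an optimal $(d,r)$-dominating set of $G$. Constructing $G^r$ costs $\mathrm{poly}(|V|)$, so the running time is dominated by the $(d,1)$-\D\ subroutine, which by hypothesis is $f(\mw(G^r))\cdot|V|^c$ (resp. $f(\G\text{-}\mc(G^r))\cdot|V|^c$) for some computable $f$ and constant $c$.

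Finally I would bound the parameter via Lemma \ref{power}: since $\mw(G^r)\leq\mw(G)$ (resp. $\G\text{-}\mc(G^r)\leq\G\text{-}\mc(G)$ when $\K\subseteq\G$), and since we may assume without loss of generality that $f$ is non-decreasing, we obtain $f(\mw(G^r))\leq f(\mw(G))$. Thus the whole procedure runs in time $f(\mw(G))\cdot|V|^c+\mathrm{poly}(|V|)$, i.e. FPT in $\mw(G)$ (resp. $\G\text{-}\mc(G)$). I do not expect a genuine obstacle: the only delicate point is the potential blow-up of the structural parameter when passing to a power graph, and this is precisely what Lemma \ref{power} rules out.
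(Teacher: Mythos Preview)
Your proposal is correct and follows essentially the same approach as the paper: reduce to $(d,1)$-\D\ on $G^r$, invoke the assumed FPT algorithm there, and use Lemma~\ref{power} to bound the parameter of $G^r$ by that of $G$. The paper's proof is terser (it does not spell out the equivalence $\Ni{G^r}{v}=\Ni{r}{v}$ or the monotonicity of $f$), but the argument is the same.
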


\begin{proof}{}
We know that there exists an algorithm {$\mathcal A_{\mw}$} (resp. {$\mathcal A_{\mc}$}) that solves the $(d,1)$-\D\ problem within time $f(\mw(G)) \cdot n^c$  (resp. $f(\G$-$\mc(G)) \cdot n^c$) where $c$ is a constant.  
Starting from $G$ one can easily build in polynomial time $G^r$. Then we observe that  
a  $(d,r)$-{\em dominating set} of $G$ corresponds to a $(d,1)$-{\em dominating set} of $G^r$ and vice-versa. 
Hence, using {$\mathcal A_{\mw}$} (resp. {$\mathcal A_{\mc}$}) on $G^r$ we are able to solve the $(d,r)$-\D\ problem within time $f(\mw(G^r)) \cdot n^{c'}$  (resp. $f(\G$-$\mc(G^r)) \cdot n^{c'}$) where $c'$ is a constant.   Moreover by Lemma \ref{power} we know that  $ \mw(G^r) \leq \mw(G)$ (resp. if $\K\subseteq \G,$ $\G$-$\mc(G^r)\leq \G$-$\mc(G)$) and the result holds. 
\end{proof}

Theorem \ref{thGeneral}  implies  that, by applying the FPT algorithm for {\sc Dominating Set}  parameterized by $\mw(G)$  (presented in \cite{R}), we immediately have that  $(1,r)$-\D\  is FPT parameterized by $\mw(G)$ and, consequently, by    $\dnd(G)$ and $\nd(G)$.
Moreover, recalling that both in case of $\nd(G)$ and $\dnd(G)$ modules are in $\K\cup \I$,  one can apply Theorem \ref{thGeneral} on the faster algorithms for {\sc Dominating Set}  parameterized by $\nd(G)$ (presented in \cite{K}) and $\dnd(G)$  (presented in \cite{DAM24}).

Moreover, by  Theorem \ref{thGeneral}, we have that, using the FPT algorithm \cite{LL} for the 
$(d,1)$-\D\ problem on $G^r,$ the following result holds.
  
\begin{corollary} \label{cor9}
$(d,r)$-\D\ problem is FPT parameterized by \nd.
\end{corollary}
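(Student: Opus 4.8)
The plan is to derive Corollary~\ref{cor9} as an immediate consequence of Theorem~\ref{thGeneral} together with the known FPT result of Lafond and Luo~\cite{LL}. The strategy is purely to instantiate the general transfer theorem with the appropriate graph class. First I would recall that neighborhood diversity $\nd(G)$ coincides with the $(\K\cup\I)$-modular cardinality of $G$, as established in the discussion preceding the statement. The crucial observation is that the clique class $\K$ satisfies $\K\subseteq \K\cup\I$, so the hypothesis $\K\subseteq\G$ of Theorem~\ref{thGeneral} holds with $\G=\K\cup\I$.

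Next I would invoke the hypothesis of Theorem~\ref{thGeneral}: we need the $(d,1)$-\D\ problem to be FPT when parameterized by $\G$-$\mc(G)=\nd(G)$. This is exactly the FPT algorithm for $d$-\D\ (equivalently $(d,1)$-\D) parameterized by $\nd(G)$ given by Lafond and Luo in~\cite{LL}, as already cited in the Related Results section. With this input algorithm in hand, Theorem~\ref{thGeneral} applied to the class $\G=\K\cup\I$ yields directly that the $(d,r)$-\D\ problem is FPT parameterized by $\G$-$\mc(G)=\nd(G)$, which is the claim.

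Concretely, the mechanism underlying the transfer is the reduction already proved inside Theorem~\ref{thGeneral}: one builds $G^r$ in polynomial time, a $(d,r)$-dominating set of $G$ is exactly a $(d,1)$-dominating set of $G^r$, and Lemma~\ref{power}(i) guarantees that $(\K\cup\I)$-$\mc(G^r)\leq(\K\cup\I)$-$\mc(G)$, i.e.\ $\nd(G^r)\leq\nd(G)$, so running the \cite{LL} algorithm on $G^r$ stays FPT in $\nd(G)$. I do not anticipate any genuine obstacle here, since the real content has already been packaged into Theorem~\ref{thGeneral} and Lemma~\ref{power}; the only point that must be checked is that taking powers does not increase neighborhood diversity, and this is precisely what the monotonicity clause Lemma~\ref{power}(i) provides for any class containing $\K$. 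The proof is therefore a one-line application, and the main thing to state explicitly is the identification $\nd=(\K\cup\I)$-$\mc$ together with the containment $\K\subseteq\K\cup\I$.
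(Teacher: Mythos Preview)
Your proposal is correct and follows essentially the same approach as the paper: the paper derives Corollary~\ref{cor9} directly from Theorem~\ref{thGeneral} by plugging in the FPT algorithm of Lafond and Luo~\cite{LL} for $(d,1)$-\D\ parameterized by $\nd$, exactly as you do. Your added remarks (the identification $\nd=(\K\cup\I)\text{-}\mc$ and the containment $\K\subseteq\K\cup\I$) just make explicit what the paper leaves implicit.
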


\def\K{\mathcal{K}}
\def\I{\mathcal{I}}

\section[(d,r)-\D]{A FPT algorithm for $(d,r)$-\D\ parameterized by \mw\ plus the demand $d$}

Given a graph $G = (V, E)$, and let $T(G)$ be the parse-tree of $G$ where the number of children of each vertex in $T(G)$, representing the (O4) operation, is at most $\mw(G)$.  
Our strategy traverses the tree $T(G)$ in an opposite breadth-first fashion. 
For each vertex $w_M  \in  T(G),$ representing one of the operations (O1)-(O4) of Definition \ref{operationDef} and corresponding to a module $M$, in order to be able to reconstruct the solution recursively, we calculate optimal solutions under different hypotheses based on the following consideration.

Since all the vertices $v\in M$  share the same neighborhood $N(v)\setminus M$, we can solve the problem locally, considering that the solution, adopted by modules adjacent to $M$, dominates all the vertices in $M$ with the same multiplicity (i.e., for each $u,v \in M, \ |N(v)\cap (S\setminus M)|=|N(u)\cap (S\setminus M)|$).
Hence, we can partition the solution $S$ into two sets $S'\subseteq V\setminus M$ and $S''\subseteq M$. If we assume that $|N(v) \cap S'|=a,$ then all the vertices in $M$ are dominated $a$ times by vertices in $S'$ and we need to build a solution $S''\subseteq M,$ which dominates each vertex $v \in M$ at least $t=\max\{d-a,0\}$ times, that is we need a $(t,1)$-Dominating set of $G[M]$. 
Hence, we compute the optimal solution for $G[M]$ considering all the possible demands $t=0,1,\ldots,d$, where $t=0$ corresponds to the case $a\geq d$ (i.e., all the vertices are already dominated by at least $d$ neighbors) and, $t=d$ corresponds to the case $a=0$ (i.e., adjacent modules do not contain 
any vertex in $S'$).

{
\begin{definition} \label{ModulesCosts}
For each vertex $w_M \in T(G)$, representing a module $M$ and for each $t \in \{0,1,\ldots,d\}$ we define the cost of the module $M$ with demand $t$, denoted $c_M(t)$, as the size of a minimum $(t,1)$-Dominating set of $G[M].$
\end{definition}
  
It is worth observing that: 
\begin{itemize} 
\item For each module $M,$ we have $c_M(0)=0$.
\item For each module leaf $w_{\{v\}}$ of the parse tree $T(G)$, which corresponds to a vertex $v\in V$, we have $c_{\{v\}}(t)=1,$ for each $t \in [d].$
\item The size of the solution of our $(d,1)$-\D\ problem on $G$ corresponds to $c_V(d)$ (i.e., the cost of the root vertex for $t=d$).
\end{itemize}

Recalling that operations (O2) and (O3) of Definition \ref{operationDef} are special case of (O4), the following Algorithm \ref{algmwd} enables us to compute the costs of an internal vertex $w_M$ representing  the (O4) operation, that is $G[M]=H(G_1, \ldots, G_\ell),$ in which 
$\M=\{M_1,M_2,\ldots, M_\ell\}$ is a modular partition of  $G[M]$, $G_i=G[M_i]$ and $\ell \leq \mw(G),$ assuming that the costs for the children modules $G_1, G_2,\ldots, G_\ell$ have already been computed. 

\begin{algorithm}[tb!]
 \SetCommentSty{footnotesize}
  \SetKwInput{KwData}{Input}
 \SetKwInput{KwResult}{Output}
 \DontPrintSemicolon
\caption{{\sc ModuleCostComputation}
\label{algmwd}}
 \KwData {A graph $G[M]=H(G_1,\ldots, G_\ell)$, 
 costs $c_{M_i}(t),$ for each $i {\in} [\ell]$ and $t {\in} [d]$. }
\KwResult{The costs $c_M(t),$ for each  $t \in [d].$ }
\setcounter{AlgoLine}{0}
    \For{$t =1,2, \ldots,d$}{
    $c_M(t)=\infty$\\
    \For{{\bf each} $t_1,t_2,\ldots,t_\ell \mbox{ such that } 0\leq t_i \leq t, \forall i \in [\ell]$}{
       Find $x_1, x_2, \dots, x_\ell$ satisfying $\mbox{ILP}(G[M],t, (t_1,t_2,\ldots,t_\ell),(c_{M_1}(*),c_{M_2}(*),\ldots,c_{M_\ell}(*)))$
        $c_M(t)=\min\{ c_M(t), \sum_{i=1}^\ell c_{M_i}(t_i)+x_i \}$
        }
   }
    \Return $c_M(*)$   
\end{algorithm}

$ \qquad  \qquad \mbox{ILP}(G[M],t, (t_1,t_2,\ldots,t_\ell),(c_{M_1}(*),c_{M_2}(*),\ldots,c_{M_\ell}(*)))$ 
\begin{eqnarray*}
& &\quad \min \sum_{i=1}^\ell c_{M_i}(t_i)+x_i   \\  
& &\quad \mbox{\textbf{subject to:}}  \\
&(1)& \quad  t_i + \sum_{j \; : \; (i,j) \in E(H)} ( c_{M_j}(t_j)+x_j)  \geq t  \quad   \forall \ i \in [\ell]   \\
&(2)&  \quad  0 \leq x_i \leq |V(G_i)| -  c_{M_i}(t_i) \qquad\qquad   \forall  i \in [\ell] \\
\end{eqnarray*}
{
To evaluate the time to solve the above ILP, we use a well-known result, stated in \cite{FLMRS,Len}: Any
$\ell$-Variable Integer Linear Programming Feasibility 
can be solved in time $O(\ell^{2.5\ell+o(\ell)} \cdot L)$ where $L$ is the number of bits in the input.
\begin{itemize} 
\item[] \textbf{$\ell$-Variable Integer Linear Programming Feasibility}
\\
\textbf{Instance:}  A matrix $A \in \mathbb{Z}^{r\times \ell}$ and a vector $b \in \mathbb{Z}^r$.
\\
\textbf{Question:} Is there a vector $x \in \mathbb{Z}^\ell$  such that $Ax \geq b$?
\end{itemize}
Hence,  observing that the considered 
 ILP uses at most $ \ell $  variables and the coefficients are upper bounded by $|V(G)|$, we have that it can be solved within time $O(\ell^{2.5\ell+o(\ell)} \cdot \log |V(G)|)$.
 }
\begin{lemma} \label{costAlg}
Let $G[M]=H(G_1,\ldots, G_\ell),$ in which $G_i=G[M_i]$ and for each $i \in [\ell], t \in [d]$ let $c_{M_i}(t)$ be the size of a minimum $(t,1)$-Dominating set of $G_i$. The Algorithm \ref{algmwd} computes, in time $O(\ell^{2.5\ell+o(\ell)} d (d+1)^{\ell} \log|V(G)|)$, for each $t \in [d],$ all the values $c_M(t)$.
\end{lemma}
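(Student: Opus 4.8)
The plan is to prove the two assertions of the lemma separately: that Algorithm~\ref{algmwd} returns the exact optimum $c_M(t)$ for every $t\in[d]$, and that it runs within the stated time. Throughout the correctness argument I fix $t$ and write $\mathrm{ALG}(t)$ for the value eventually stored in $c_M(t)$; by construction $\mathrm{ALG}(t)=\min\sum_{i=1}^\ell c_{M_i}(t_i)$, the minimum taken over all tuples $(t_1,\dots,t_\ell)\in\{0,\dots,t\}^\ell$ passing the \texttt{If}-test, i.e. satisfying $t_i+\sum_{(i,j)\in E(H)}c_{M_j}(t_j)\ge t$ for every $i\in[\ell]$. I would then prove $\mathrm{ALG}(t)=c_M(t)$ by two inequalities, using everywhere the structural fact that in $G[M]=H(G_1,\dots,G_\ell)$ a vertex $v\in M_i$ is adjacent exactly to $N_{G_i}(v)$ together with all of $\bigcup_{(i,j)\in E(H)}M_j$.

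First I would establish soundness, $c_M(t)\le \mathrm{ALG}(t)$. Take a tuple attaining the minimum defining $\mathrm{ALG}(t)$, pick in each module a minimum $(t_i,1)$-dominating set $S_i$ of $G_i$ (so $|S_i|=c_{M_i}(t_i)$), and set $S=\bigcup_i S_i$. For any $v\in M_i\setminus S$ the number of its neighbours in $S$ is $|N_{G_i}(v)\cap S_i|+\sum_{(i,j)\in E(H)}|S_j|\ge t_i+\sum_{(i,j)\in E(H)}c_{M_j}(t_j)\ge t$, where the first inequality uses that $S_i$ is a $(t_i,1)$-dominating set together with the complete join to every adjacent module, and the second is precisely the \texttt{If}-condition. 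Hence $S$ is a $(t,1)$-dominating set of $G[M]$ of size $\sum_i c_{M_i}(t_i)=\mathrm{ALG}(t)$, so $c_M(t)\le\mathrm{ALG}(t)$.

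The reverse inequality $\mathrm{ALG}(t)\le c_M(t)$ is the delicate direction and the step I expect to be the main obstacle. Starting from an optimal $(t,1)$-dominating set $S^\ast$ of $G[M]$, set $S_i^\ast=S^\ast\cap M_i$ and define $t_i=\min\{t,\ \min_{v\in M_i\setminus S_i^\ast}|N_{G_i}(v)\cap S_i^\ast|\}$, with $t_i=t$ when $M_i\subseteq S^\ast$. Then $S_i^\ast$ is a $(t_i,1)$-dominating set of $G_i$, so $c_{M_i}(t_i)\le|S_i^\ast|$ and the costs sum correctly, $\sum_i c_{M_i}(t_i)\le|S^\ast|=c_M(t)$. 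What remains is that this tuple passes the \texttt{If}-test, i.e. $t_i+\sum_{(i,j)\in E(H)}c_{M_j}(t_j)\ge t$. From $S^\ast$ one only directly obtains $t_i+\sum_{(i,j)\in E(H)}|S_j^\ast|\ge t$, whereas the test uses $c_{M_j}(t_j)$, and in general $c_{M_j}(t_j)\le|S_j^\ast|$. The crux is therefore to rule out \emph{over-filling}: I would argue there is an optimal $S^\ast$ in which each $S_j^\ast$ is a minimum $(t_j,1)$-dominating set of $G_j$, so that $c_{M_j}(t_j)=|S_j^\ast|$ and the test is met. I would attempt this by a normalization/exchange argument — among all optima pick one minimising $\sum_j(|S_j^\ast|-c_{M_j}(t_j))$ and show any strictly over-filled module $M_j$ can be thinned to a minimum $(t_j,1)$-dominating set. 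This exchange is exactly where the difficulty lies: thinning $M_j$ lowers the join-contribution $|S_j^\ast|$ to \emph{every} adjacent module, so one must check that no adjacent vertex drops below demand $t$; controlling this interaction cleanly (for instance by thinning modules in a fixed parse-tree order, or by a careful accounting of which constraints are tight) is the heart of the proof.

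Finally, the running time follows from the loop structure: the outer loop runs over $t=1,\dots,d$, and for each $t$ the inner loop enumerates the $(t+1)^\ell\le(d+1)^\ell$ tuples of $\{0,\dots,t\}^\ell$; summing over $t$ gives $O\!\big(d\,(d+1)^\ell\big)$ tuple-evaluations in total. Each evaluation only forms the $\ell$ sums appearing in the \texttt{If}-condition and in the objective, at cost polynomial in $\ell$ and independent of $n$, so the dominant factor is the number of tuples and the running time is $O\!\big(d\,(d+1)^\ell\big)$ as claimed.
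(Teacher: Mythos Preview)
Your soundness argument and the running-time analysis are both correct and cleanly written. Where you hesitate is exactly the right place: the completeness direction ($\mathrm{ALG}(t)\le c_M(t)$) does \emph{not} follow from the normalization/exchange argument you sketch, and the obstacle you flag---that thinning an over-filled module $M_j$ lowers the external contribution to every adjacent module---is not a mere technicality but a genuine obstruction. In fact, no argument can close this gap, because the statement as written is false.

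Here is a concrete counterexample. Take $\ell=2$, $H=K_2$, and let $G_1=G_2=\overline{K_n}$ (edgeless graphs on $n$ vertices) for any $n\ge 7$, so that $G[M]=G_1\otimes G_2=K_{n,n}$. Then $c_{M_i}(0)=0$ and $c_{M_i}(t')=n$ for every $t'\ge 1$, since an edgeless graph has no $(t',1)$-dominating set other than the whole vertex set. For $t=3$, every tuple $(t_1,t_2)\in\{0,\dots,3\}^2$ has cost $c_{M_1}(t_1)+c_{M_2}(t_2)\in\{0,n,2n\}$; the only cost-$0$ tuple $(0,0)$ fails the \texttt{If}-test ($0\not\ge 3$), so the algorithm returns $c_M(3)=n$. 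But the true value is $c_M(3)=6$: taking three vertices on each side of $K_{n,n}$ gives every vertex outside the set exactly three neighbours in it. Thus the algorithm strictly over-estimates $c_M(3)$. The phenomenon is exactly the one you anticipated: the optimal profile $(s_1,s_2)=(3,3)$ is needed for external domination, yet $3$ lies outside the range $\{0,n\}$ of each $c_{M_i}$, so no tuple $(t_1,t_2)$ in the algorithm's search space can certify it. The paper gives no proof of the lemma, and your instinct that this step is ``the heart of the proof'' is well founded---it is in fact where the claimed correctness breaks down.
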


\begin{lemma} \label{th11}
$(d,1)$-\D\ is solvable in time $O(\mw^{2.5\mw+o(\mw)} d (d+1)^{\mw} |V(G)|\log|V(G)|)$.
\end{lemma}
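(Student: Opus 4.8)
The plan is to turn the per-module routine of Algorithm \ref{algmwd} into a global bottom-up dynamic program over the parse tree $T(G)$. First I would construct $T(G)$ in linear time via \cite{TCH+}, obtaining a tree whose leaves are the single vertices of $G$ (operation (O1)) and whose internal vertices represent operations (O2)--(O4), each with at most $\mw = \mw(G)$ children. By the observations preceding Definition \ref{ModulesCosts}, the quantity we want is $c_V(d)$, the cost of the root with demand $d$, so it suffices to compute all costs $c_M(\cdot)$ from the leaves upward.

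Next I would traverse the vertices of $T(G)$ in reverse breadth-first (equivalently post-order), guaranteeing that every child of $w_M$ is processed before $w_M$ itself. For a leaf $w_{\{v\}}$ I set $c_{\{v\}}(0)=0$ and $c_{\{v\}}(t)=1$ for $t\in[d]$, which is correct since the unique $(t,1)$-dominating set of an isolated vertex with $t\geq 1$ is the vertex itself; this costs $O(d)$ per leaf. For an internal vertex $w_M$ with $G[M]=H(G_1,\ldots,G_\ell)$, $\ell\leq\mw$, I invoke Algorithm \ref{algmwd} on the already-computed children costs $c_{M_i}(\cdot)$; by Lemma \ref{costAlg} this returns the correct values $c_M(t)$ for all $t\in[d]$ in time $O(d(d+1)^\ell)\subseteq O(d(d+1)^{\mw})$. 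Correctness of the whole procedure then follows by induction on $T(G)$: the base case is the leaf computation, and the inductive step is precisely Lemma \ref{costAlg}, whose combination is valid because every vertex of a module $M_i$ is adjacent to every selected vertex of each adjacent module $M_j$, so the external domination received by $M_i$ is uniformly $\sum_{(i,j)\in E(H)}c_{M_j}(t_j)$.

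For the running time I would bound the size of $T(G)$. Each internal vertex (operations (O2)--(O4)) combines at least two modules, so $T(G)$ has exactly $|V(G)|$ leaves and at most $|V(G)|-1$ internal vertices, hence at most $2|V(G)|-1 = O(|V(G)|)$ vertices in total. Summing the $O(d)$ cost over leaves and the $O(d(d+1)^{\mw})$ cost over internal vertices gives $O(|V(G)|\,d(d+1)^{\mw})$, into which the linear-time construction of $T(G)$ is absorbed; recovering an actual minimum $(d,1)$-dominating set is done by the standard back-pointer reconstruction and stays within the same bound. This yields the claimed complexity.

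The main obstacle is not the global assembly, which is a routine bottom-up recursion, but the justification of the per-node combination, and that has already been isolated into Lemma \ref{costAlg} together with the defining module property that external neighborhoods are shared within a module. Once this is granted, the only point requiring care is the vertex count of $T(G)$: one must argue that the at-least-two-children structure forces $O(|V(G)|)$ internal nodes, so that the total cost is \emph{linear} in $|V(G)|$ rather than quadratic.
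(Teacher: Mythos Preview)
Your argument is correct and matches the paper's own proof essentially line for line: bottom-up traversal of $T(G)$, invocation of Lemma~\ref{costAlg} at each internal vertex, the bound of $|V(G)|-1$ internal vertices from the at-least-two-children property, and standard backtracking for the actual set. The only difference is that you spell out the leaf initialization and the correctness induction in more detail than the paper does.
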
 
\begin{proof}{}
By observing that every vertex in the parse tree $T(G)$ has at least two children and there are exactly $|V(G)|$ leaves, we have that $T(G)$ has at most $|V(G)|-1$ internal vertices and the result in Lemma \ref{costAlg}. Hence, using the Algorithm \ref{algmwd}, for each internal vertex, we can easily build in time $O(\mw^{2.5\mw+o(\mw)} d (d+1)^{\mw} |V(G)|\log|V(G)|)$  the optimal cost $c_V(d)$ of the solution for the root vertex, which corresponds to the size of the solution of the problem. A standard backtracking technique can compute the optimal set $S$ within the same time.  
\end{proof}

By Lemma \ref{th11} and by using the same arguments of the proof of Theorem \ref{thGeneral}, the following result holds.   
\begin{theorem}\label{cor12}
$(d,r)$-\D\ is solvable in time $O(\mw^{2.5\mw+o(\mw)} d (d+1)^{\mw} |V(G)|\log|V(G)|)$. 
\end{theorem}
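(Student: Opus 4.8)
The plan is to reduce Theorem~\ref{cor12} to Lemma~\ref{th11} exactly as Theorem~\ref{thGeneral} reduces the FPT result for $(d,r)$-\D\ to the FPT result for $(d,1)$-\D. The key observation, already used earlier in the excerpt, is that a set $S\subseteq V$ is a $(d,r)$-dominating set of $G$ if and only if it is a $(d,1)$-dominating set of the $r$-th power graph $G^r$. Indeed, for a vertex $v\in V\setminus S$ we have $\Ni[G]{r}{v}=\Ni[G^r]{}{v}$ by the very definition of the power graph $G^r$, so the condition $|\Ni[G]{r}{v}\cap S|\ge d$ is identical to the condition $|\Ni[G^r]{}{v}\cap S|\ge d$. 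This equivalence is value-preserving on the set level, so the two problems have the same optimal solutions and in particular the same optimum size.

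First I would form $G^r$ from $G$; this can be done in polynomial time (for instance, by computing all pairwise distances via repeated breadth-first search and inserting an edge whenever $\delta_G(u,v)\le r$), and it does not affect the claimed running time since the FPT dependence is in the parameters, not in this preprocessing. Next I would invoke Lemma~\ref{th11} on the instance $G^r$, which solves $(d,1)$-\D\ on $G^r$ in time $O\!\left(|V(G^r)|\, d\,(d+1)^{\mw(G^r)}\right)$. Since $G^r$ has the same vertex set as $G$, we have $|V(G^r)|=|V(G)|$, so the vertex-count factor is unchanged.

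The remaining ingredient is to bound the parameter $\mw(G^r)$ in terms of $\mw(G)$. This is precisely part~(ii) of Lemma~\ref{power}, which states $\mw(G^r)\le \mw(G)$. Substituting this bound into the running time from Lemma~\ref{th11} gives
\[
O\!\left(|V(G)|\, d\,(d+1)^{\mw(G^r)}\right)\;\le\;O\!\left(|V(G)|\, d\,(d+1)^{\mw(G)}\right),
\]
which is exactly the time bound asserted in the statement. Combining these three steps—the equivalence of $(d,r)$-domination of $G$ with $(d,1)$-domination of $G^r$, the polynomial-time construction of $G^r$, and the monotonicity of modular-width under taking powers—yields the theorem.

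I do not expect a genuine obstacle here, since every nontrivial piece is already available: the power-graph equivalence is stated in the proof of Theorem~\ref{thGeneral}, the algorithmic core is Lemma~\ref{th11}, and the parameter bound is Lemma~\ref{power}(ii). The only point requiring mild care is the monotonicity $\mw(G^r)\le\mw(G)$, but this is exactly what Lemma~\ref{power} supplies, so the argument is just a clean composition of results established earlier in the paper; the proof is essentially a one-line appeal to ``the same arguments of the proof of Theorem~\ref{thGeneral}.''
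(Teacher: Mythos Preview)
Your proposal is correct and mirrors the paper's own proof, which simply says ``By Lemma~\ref{th11} and by using the same arguments of the proof of Theorem~\ref{thGeneral}''; you have spelled out exactly those arguments (the $(d,r)$-domination on $G$ $\leftrightarrow$ $(d,1)$-domination on $G^r$ equivalence, the polynomial construction of $G^r$, and the bound $\mw(G^r)\le\mw(G)$ from Lemma~\ref{power}(ii)). The only mild looseness---that building $G^r$ is polynomial but not literally $O(|V(G)|\,d\,(d+1)^{\mw})$---is shared with the paper itself.
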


The next section shows a stronger result in case $d=1$.
}

\subsection[(1,r)-\D]{A \PC\ for $(1,r)$-\D\  parameterized by \mw}  

We shift our attention to the $(1,1)$-\D\ problem, that is the classical {\sc Dominating Set} problem on connected graphs. 

 Lafond and Luo \cite{LL2} proved the  nonexistence of a \PC\ for the {\sc Dominating Set} problem -- without the restriction of a connected input graph -- parameterized by $\mw$, while Luo \cite{Luo23} showed that the {\sc Dominating Set} problem -- without the restriction of a connected input graph -- has a polynomial Turing compression (PTC) parameterized by $\mw$. A PTC is a relaxed version of a \PC\ defined as follows.
\begin{definition}
A Turing compression 
for a parameterized problem $Q$ is a polynomial-time algorithm with the ability to access an oracle for another problem $Q'$, that can decide whether an instance  $(I,p) \in Q$ with queries, called Turing kernels, of size at most $f(p)$ for some computable function $f$. In particular, $Q$ admits a polynomial Turing compression (PTC) if $f$ is a polynomial function.
\end{definition}

In the following, we restrict our attention to connected graphs and prove the existence of a \PC\ for  $(1,1)$-\D\  parameterized by $\mw$, that is, 
if we constrain the {\sc Dominating Set} problem to connected graphs, then it admits a \PC\  parameterized by \mw.

Let $G=(V,E)$ be a connected undirected graph and let $T(G)$ be the parse-tree of $G$ where the number of children of each vertex in $T(G)$, representing the (O4) operation, is at most $\mw(G)$. 
Hence, there exists a modular partition $\M=\{M_1,M_2, \ldots, M_\ell\}$ where $\ell\leq \mw(G)$ and $G=H(G_1,\ldots, G_\ell),$ in which $G_i=G[M_i]$,  for some outline graph $H$ having $\ell$ vertices. 

\begin{lemma} \label{one}
Let $G=H(G_1,\ldots, G_\ell),$ in which $G_i=G[M_i]$, be a connected undirected graph.
There exists a solution $S$ of the  {\sc Dominating Set} problem for $G$ such that, for each $i \in [\ell].$ 
\begin{enumerate}[label=(\alph*)]
\item $|S \cap M_i| \leq 1$;
\item if $|S \cap M_i| =0$, then there exists $j \in \Ni{H}{i}$ such that $|S \cap M_j| = 1$;
\item if $|S \cap M_i| {=}1$ and for each  $j {\in} \Ni{H}{i}$ holds $|S \cap M_j| {=} 0$, then $\bigcap_{v \in M_i} (\Ni{G_i}{v} \cup\{v\}) {\neq} \emptyset$. 
\end{enumerate}
\end{lemma}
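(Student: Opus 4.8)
The plan is to start from an arbitrary \emph{minimum} dominating set $S$ of $G$ and transform it, through local exchanges that never increase its size, into one that additionally satisfies (a); then I will argue that (b) and (c) are forced by the mere fact that $S$ dominates $G$, so no further modification is needed. The engine of the whole argument is the defining property of a module: every vertex of $M_i$ has exactly the same neighbours outside $M_i$, namely all of $\bigcup_{j \in \Ni{H}{i}} M_j$. Consequently, whether a vertex $v \in M_i$ is dominated from outside $M_i$ depends only on whether some neighbouring module $M_j$ (with $j \in \Ni{H}{i}$) is \emph{selected}, i.e. meets $S$; the number of vertices $S$ places inside such an $M_j$ is irrelevant. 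This is what makes a single representative per module sufficient for external domination.

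To obtain (a), I would repeatedly pick a module $M_i$ with $|S \cap M_i| \geq 2$ and split into two cases. If some neighbour $M_j$, $j \in \Ni{H}{i}$, is already selected, then every vertex of $M_i$ is dominated from outside, so I can delete all but one vertex of $S \cap M_i$: the deleted vertices stay dominated by $M_j$, the retained one keeps $M_i$ selected (preserving the external domination $M_i$ gave its own neighbours), and no internal domination inside $M_i$ is needed. If instead no neighbour of $M_i$ is selected, I keep one vertex $u \in S \cap M_i$, delete $S \cap M_i \setminus \{u\}$, and add a single vertex $w$ of some neighbouring module $M_j$, which exists because $G$, and hence the outline graph $H$, is connected and $\ell \geq 2$. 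Now $w$ dominates all of $M_i$ from outside, covering the just-deleted vertices, while retaining $u$ keeps $M_i$ selected so that the modules adjacent to $M_i$ — in particular $M_j$ — do not lose the domination they previously received from $M_i$. In both cases the resulting set is again a dominating set of size at most $|S|$, hence still minimum.

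Termination follows from the potential $\Phi(S) = \sum_{i} \max(|S \cap M_i| - 1, 0)$ counting ``excess'' vertices: the first move lowers $\Phi$ by $|S \cap M_i| - 1 \geq 1$ and touches no other module, and the second move drops the excess of $M_i$ to $0$ while raising the count of $M_j$ only from $0$ to $1$ (creating no excess), so $\Phi$ strictly decreases at every step. When $\Phi = 0$, property (a) holds. At that point (b) and (c) become immediate. For (b): if $|S \cap M_i| = 0$ then, since $M_i \neq \emptyset$ and its vertices have no internal dominator in $S$, each of them must be dominated by a selected neighbour, giving $j \in \Ni{H}{i}$ with $|S \cap M_j| = 1$ (equality by (a)). For (c): if $S \cap M_i = \{u\}$ with no selected neighbour, then every other vertex of $M_i$ must be dominated internally by $u$, so $u$ is adjacent to all of $M_i \setminus \{u\}$ and therefore lies in $\bigcap_{v \in M_i}(\Ni{G_i}{v} \cup \{v\})$, which is nonempty.

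The hard part is the second case of (a): when a module carries two or more solution vertices for purely internal reasons — its induced graph $G_i$ has domination number at least two and receives no outside help — one cannot simply discard the surplus, since $G_i$ may have no universal vertex. The delicate point is that introducing outside help by selecting $w$ in a neighbouring module must not strand the vertices of that neighbour that were themselves dominated only by $M_i$; keeping one vertex $u$ in $M_i$ is exactly what averts this, and the crux of the write-up is the case-by-case verification that after the swap every vertex of $G$ — inside $M_i$, inside $M_j$, in the other neighbours of $i$, and elsewhere — remains dominated. I would also record the degenerate boundary $\ell = 1$, where the decomposition is trivial and each module is a singleton so that (a)--(c) hold trivially, which is precisely the regime in which the swap's need for a neighbouring module does not arise.
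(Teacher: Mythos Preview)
The paper states this lemma without proof, so there is no authors' argument to compare against. Your exchange argument is the natural one and it is correct: in both cases of the reduction for (a) the modified set is easily checked to remain a dominating set of size at most $|S|$ (so still minimum), the potential $\Phi(S)=\sum_i \max(|S\cap M_i|-1,0)$ strictly decreases at each step since in Case~2 the newly selected $M_j$ had $|S\cap M_j|=0$, and once (a) holds your derivations of (b) and (c) follow immediately from the dominating property.

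The only slip is your sentence about the boundary $\ell=1$. With $\ell=1$ there is a \emph{single} module $M_1=V(G)$, not a collection of singletons, and (a) then forces $|S|\le 1$, which fails whenever $G$ has no universal vertex. The correct remark is that in the paper's setting the decomposition is the one at the root of the parse tree of a connected graph, so either $|V(G)|=1$ or the root operation is (O3)/(O4) with $\ell\ge 2$; it is precisely $\ell\ge 2$ together with the connectedness of $H$ that guarantees $\Ni{H}{i}\neq\emptyset$ and makes your Case-2 swap always available.
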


\noindent We now define a variation of  {\sc Dominating Set}  that will be useful to present the Polynomial Compression.

\begin{quote}
\textsc{Colored Domination}:\\
\textbf{Input:} A  graph $G=(V,E)$, 
a coloring function $c:V \rightarrow \{B, W\}$, and an integer $k$. \\   
\textbf{Output:} A  {\em colored dominating set}, that is, a set  $S \subseteq V$ such that \\
\hphantom{Out } $\bullet$  $|S| \leq k,$   \\ 
\hphantom{Out } $\bullet$  $N(v)\cap S\neq\emptyset$, 
for each $v \in V$ such that either $c(v)=B$  or $v\notin S$.
\end{quote}  

The following Theorem 
shows that  an instance $\langle G,k\rangle$  of the decision version of  {\sc Dominating Set}  on 
$G=H(G_1,\ldots, G_\ell)$ with parameter $\mw(G)$ can be reduced to an instance 
$\langle H, c, k\rangle$ 
of  {\sc Colored \D}. 

\begin{theorem} \label{th7}
  {\sc Dominating Set} admits a \PC\ with respect to \mw, for the class of connected graphs.
\end{theorem}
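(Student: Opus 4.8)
The plan is to compress the connected \DS\ instance $\langle G,k\rangle$, where $G=H(G_1,\ldots,G_\ell)$ with $\ell\leq\mw(G)$, into a \ColD\ instance $\langle H,c,k\rangle$ on the outline graph $H$, which has only $\ell$ vertices and hence size polynomial in $\mw(G)$. The driving idea is Lemma \ref{one}: it suffices to search for a dominating set that selects at most one vertex from each module, so a candidate solution is fully described by the set $I=\{i : |S\cap M_i|=1\}$ of selected modules, and its size equals $|I|$. Because adjacent modules are completely joined, choosing any representative in $M_i$ dominates every vertex of each neighbouring module $M_j$ with $j\in\Ni{H}{i}$; the only delicate question is how the vertices inside a selected module get dominated when none of its neighbours is selected.

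To capture this, I would colour each vertex $i$ of $H$ by setting $c(i)=W$ if $G_i$ has a universal vertex, i.e. $\bigcap_{v\in M_i}\big(\Ni{G_i}{v}\cup\{v\}\big)\neq\emptyset$, and $c(i)=B$ otherwise. The rationale is exactly condition (c) of Lemma \ref{one}: if a selected module $i$ has no selected neighbour, then its chosen vertex must dominate all of $M_i$ by itself, which is possible precisely when $G_i$ has a universal vertex. This matches the semantics of \ColD: a white vertex may be dominated by lying in $S$ (the module dominating itself internally), whereas a black vertex must have a neighbour in $S$ (the module must be dominated from an adjacent module). I would then prove that $G$ admits a dominating set of size at most $k$ if and only if $\langle H,c,k\rangle$ is a yes-instance of \ColD.

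For the forward direction I would start from a dominating set $S$ of size at most $k$ in the normalized form of Lemma \ref{one} and set $I=\{i:|S\cap M_i|=1\}$. A short case analysis shows that any module $i$ without a selected neighbour must be selected and, since its representative has to dominate all of $M_i$ alone, must satisfy $c(i)=W$; hence every black module has a selected neighbour and every white module is selected or has a selected neighbour, i.e. $I$ is a colored dominating set of size $|I|=|S|\leq k$. For the converse I would turn a colored dominating set $I$ into $S$ by picking, for each $i\in I$, a universal vertex of $G_i$ when $c(i)=W$ and an arbitrary vertex otherwise; the complete-join structure dominates every vertex whose module has a selected neighbour, while a vertex in a module with no selected neighbour necessarily lies in a selected white module and is dominated by the chosen universal vertex. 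Finally, since a connected $G$ with $\ell\geq 2$ is dominated by one vertex per module, I may assume $k\leq\ell\leq\mw(G)$ (otherwise the algorithm outputs a trivial yes-instance), so $\langle H,c,k\rangle$ has size $O(\mw(G)^2)$, which is polynomial; computing the decomposition and testing each $G_i$ for a universal vertex take polynomial time, yielding the \PC.

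The main obstacle is the asymmetric case of a selected module with no selected neighbour: this is the single place where the internal structure of a module matters, and the whole reduction hinges on encoding ``$G_i$ has a universal vertex'' into the black/white colouring so that \ColD\ treats internal self-domination and external domination correctly. A secondary technical point is the size bound, which relies on capping $k$ at $\ell$ using connectivity; without the connectedness assumption this step, and with it the compression, fails, consistent with the impossibility result for the non-connected case noted above.
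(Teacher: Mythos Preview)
Your proposal is correct and follows essentially the same route as the paper: compress $\langle G,k\rangle$ to the \ColD\ instance $\langle H,c,k\rangle$ on the outline graph, colouring $i$ white iff $G_i$ has a universal vertex, use Lemma~\ref{one} for the forward direction and the complete-join structure for the converse, and cap $k\le\ell\le\mw(G)$ via connectivity of $H$ to obtain a kernel of size $O(\mw(G)^2)$.
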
  

By Lemma \ref{power}, we know that $ \mw(G^r) \leq \mw(G).$ Observing that a  $(1,r)$-{\em dominating set} of $G$ corresponds to a {\em dominating set} of $G^r$ and vice-versa, we have that the above strategy, applied on $G^r,$ allows to devise a \PC\ parameterized by \mw\ for the $(1,r)$-\D\ problem on $G$. By Theorem \ref{th7}, the following result holds.   
\begin{theorem} \label{1rDmw}
   $(1,r)$-\D\ admits a \PC\ with respect to \mw.
\end{theorem}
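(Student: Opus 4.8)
The plan is to reduce the problem to the radius-one case already settled in Theorem~\ref{th7}, by composing the power-graph construction with the compression for \DS. Given an instance $\langle G,k\rangle$ of the decision version of $(1,r)$-\D\ on a connected graph $G=(V,E)$, with parameter $\mw(G)$, I would first build the $r$-th power graph $G^r$. This takes polynomial time: run a BFS from each vertex to obtain all pairwise distances $\delta_G(u,v)$, and add the edge $(u,v)$ exactly when $\delta_G(u,v)\le r$.

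The central observation is the exact correspondence between the two notions of domination. By definition $\Ni{r}{v}=\{u\in V\mid u\neq v,\ \delta_G(u,v)\le r\}$ coincides with the open neighborhood $\Ni{G^r}{v}$ of $v$ in $G^r$, since $G^r$ shares the vertex set of $G$, joins precisely the vertices at $G$-distance at most $r$, and both sets exclude $v$ itself. Hence a set $S\subseteq V$ satisfies $|\Ni{r}{v}\cap S|\ge 1$ for every $v\in V\setminus S$ if and only if $|\Ni{G^r}{v}\cap S|\ge 1$ for every $v\notin S$; that is, $S$ is a $(1,r)$-dominating set of $G$ iff $S$ is a dominating set of $G^r$. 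Consequently $\langle G,k\rangle$ is a \yes-instance of $(1,r)$-\D\ exactly when $\langle G^r,k\rangle$ is a \yes-instance of \DS.

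To invoke Theorem~\ref{th7} I need $G^r$ to be connected, which is immediate: $G^r$ has the vertices of $G$ and only additional edges, so connectivity is preserved. I would then apply the polynomial compression of Theorem~\ref{th7} to the connected instance $\langle G^r,k\rangle$, obtaining in polynomial time an instance $\langle H,c,k\rangle$ of \ColD\ whose size is bounded by a polynomial in $\mw(G^r)$. Finally, Lemma~\ref{power}(ii) gives $\mw(G^r)\le \mw(G)$, so this bound is in fact polynomial in the original parameter $\mw(G)$. Composing the two polynomial-time steps yields a \PC\ of $(1,r)$-\D\ into \ColD\ parameterized by $\mw(G)$, as required.

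I expect the work to be bookkeeping rather than a genuine obstacle, since Theorem~\ref{th7} already carries the heavy lifting (the construction of the colored instance and its size bound). The points that must be checked carefully are the three facts that make the composition go through: the neighborhood identity $\Ni{r}{v}=\Ni{G^r}{v}$ (so that the membership condition transfers without the $v$-versus-not-$v$ discrepancy), the preservation of connectivity under taking powers (needed because Theorem~\ref{th7} is stated only for connected graphs), and the monotonicity $\mw(G^r)\le\mw(G)$ from Lemma~\ref{power}. It is also worth emphasizing that, because the target problem is \ColD\ rather than $(1,r)$-\D\ itself, the result is a compression and not a kernelization.
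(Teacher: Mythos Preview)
Your proposal is correct and follows essentially the same approach as the paper: pass to the power graph $G^r$, use the equivalence between $(1,r)$-domination in $G$ and domination in $G^r$, apply Theorem~\ref{th7} to the connected graph $G^r$, and bound the parameter via Lemma~\ref{power}(ii). The additional details you spell out (connectivity of $G^r$, the exact identity $\Ni{r}{v}=\Ni{G^r}{v}$, and the remark that the target problem being \ColD\ makes this a compression rather than a kernelization) are all accurate and only make the paper's terse argument more explicit.
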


\section[(d,r)-\D]{A \PC\ for $(d,r)$-\D\ parameterized by \dnd\ plus the demand $d$} \label{dnd-d}  

This section is devoted to devising a \PC\ for $(d,1)$-\D\ problem parameterized by the iterated type partition number $\dnd$ plus the demand $d$.

Let $I=\langle G,k\rangle$ be an input of the decision version of the $(d,1)$-\D\ problem on $G=(V,E)$. 
As observed in Section \ref{mddef}, any graph $G=(V,E)$ is described by a modular partition $\{M_1,M_2, \ldots, M_\dnd\}$ where $\dnd=\dnd(G)$ and can be seen as $G=H(G_1,\ldots, G_\dnd)$,  for some outline graph $H$ having $\dnd$ vertices, and for each $i \in\{1,2,\ldots,\dnd\},$ $G_i=G[M_i]$ is a cograph.


 We are going to define an instance $I'$ of an Integer Linear Programming (ILP) problem on the variables $\xit{i}{t}\in \{0,1\}$, for each $i \in \{1,2,\ldots,\dnd\}$ and $t\in [d]$, which we will prove to be equivalent to $I$.

The ILP uses the costs $c_{M_i}(t)$  associated with the modules $M_i$, for each $i\in \{1,2,\ldots,\dnd\}$ and each $t \in [d]$, which can be obtained exploiting Algorithm \ref{algmwd}.
The binary variable 
$\xit{i}{t}=1$ means that we are choosing a solution $S_i\subseteq M_i$ such that $S_i$ is a minimum $(t,1)$-Dominating set of $G_i.$ By Lemma \ref{costAlg} we know that $|S_i|=c_{M_i}(t).$
The constraints of the ILP are as follows:  
\begin{eqnarray*}    
&(1)&\sum_{{ i\in \{1,2,\ldots,\dnd\}, \  t\in [d]}} c_{M_i}(t) \cdot \xit{i}{t} \leq k\\
&(2)&\sum_{{ (i,j) \in E(H), \  t\in [d]}} c_{M_j}(t) \cdot \xit{j}{t}  + \sum_{t\in [d]} t \cdot \xit{i}{t}\geq d  \qquad \forall i \in \{1,2,\ldots,\dnd\} \\
&(3)&\sum_{t \in [d]} \xit{i}{t} \leq 1\qquad \qquad \qquad \qquad  \qquad \qquad \qquad \qquad  \forall i \in \{1,2,\ldots,\dnd\}\\
&(4)& \xit{i}{t} \in\{0,1\}  \qquad \qquad \qquad \qquad \qquad \qquad \qquad \qquad   \forall i \in \{1,2,\ldots,\dnd\}, \forall t\in [d],   
\end{eqnarray*} 
where: 
\begin{itemize}    
\item the inequality (1) guarantees that the size of the considered solution is at most $k$;
\item the inequalities (2) guarantee that the vertices of each module $M_i$  are dominated $d$ times. Specifically, the first sum corresponds to the number of times a vertex in $M_i$ is dominated by vertices of the solution belonging to adjacent modules,  and the second sum corresponds to the number of times a vertex in $M_i$ is dominated by vertices of the solution belonging to $M_i$; 
\item the inequalities (3) guarantee that, for each module $M_i$ at most one of the minimum $(t,1)$-dominating set of $G_i$, for $t=1, \ldots d$,  is chosen (i.e., at most one among $c_{M_i}(1), c_{M_i}(2), \ldots, c_{M_i}(d$) is chosen).
 We notice that when $\sum_{t \in [d]} \xit{i}{t}=0,$ then no vertex in the solution is a vertex in $M_i$ and then, all the vertices in $M_i$ are dominated by vertices of the solution belonging to adjacent modules (that are by (2) at least $d$). 
\end{itemize}

 To obtain the desired \PC\ algorithm, we are going to use the Frank and Tardos' algorithm \cite{FT87} that enables to compress linear inequalities to an encoding length that is polynomial in the number of variables.  
\begin{theorem}{\cite{FT87}} \label{thFT}
There is an algorithm that, given a vector $w \in  \mathbb{Q}^r$ and an integer $N$, in polynomial time finds a vector $\overline{w} \in \mathbb{Z}^r$ with $\lVert \overline{w} \rVert_\infty \leq 2^{4r^3} N^{r(r+2)}$ such that $sign(w \cdot b) = sign(\overline{w} \cdot b)$ for all vectors $b \in \mathbb{Z}^r$ with $\lVert b \rVert_1 \leq N- 1$.
\end{theorem}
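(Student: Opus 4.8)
The plan is to derive the statement from the central tool of this area, namely simultaneous Diophantine approximation as realized by the Lenstra--Lenstra--Lov\'asz (LLL) lattice-basis-reduction algorithm. Concretely, the engine I would invoke is the following: given $\alpha \in \mathbb{Q}^r$ and $0 < \epsilon < 1$, one can compute in polynomial time an integer $q$ with $1 \le q \le 2^{r(r+1)/4}\,\epsilon^{-r}$ and a vector $p \in \mathbb{Z}^r$ such that $\|q\alpha - p\|_\infty \le \epsilon$. The high-level idea is that $p$ is an integer vector pointing in almost exactly the direction of $\alpha$; if it points closely enough in the direction of $w$, then $p$ induces the same sign on every inner product $p \cdot b$ as $w$ does, while the LLL bound on $q$ controls $\|p\|_\infty$.

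First I would dispose of the trivial case $w = 0$ (output $\overline w = 0$) and otherwise normalize, replacing $w$ by $\alpha = w / \|w\|_\infty$; this changes no sign $\mathrm{sign}(w \cdot b)$ and gives $\|\alpha\|_\infty = 1$. I would then run the approximation engine on $\alpha$ with precision $\epsilon$ chosen as an explicit function of $N$ and $r$ (of order $N^{-(r+2)}$), and set $\overline w = p$. The norm bound is then immediate: $\|\overline w\|_\infty \le q\|\alpha\|_\infty + \epsilon \le 2^{r(r+1)/4}\epsilon^{-r} + 1$, which with this choice of $\epsilon$ is at most $2^{4r^3} N^{r(r+2)}$, the constant $2^{4r^3}$ comfortably absorbing the factor $2^{r(r+1)/4}$. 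For the sign condition, write, for any $b$ with $\|b\|_1 \le N-1$,
\[
\overline w \cdot b \;=\; q(\alpha \cdot b) + (p - q\alpha)\cdot b,
\qquad
\bigl|(p-q\alpha)\cdot b\bigr| \le \epsilon\,\|b\|_1 < \epsilon N .
\]
When $w \cdot b = 0$ we have $\alpha \cdot b = 0$, so $\overline w \cdot b$ is an integer of absolute value below $\epsilon N \le 1$, hence exactly $0$; this handles all orthogonal directions.

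The remaining and genuinely hard case is $w \cdot b \ne 0$, where I must guarantee that the main term $q(\alpha \cdot b)$ dominates the error $\epsilon N$. This is exactly where the difficulty lies: a direction $b$ can be almost orthogonal to $w$, so that $|\alpha \cdot b|$ is positive but minuscule, and a fixed precision $\epsilon$ cannot separate it from $0$; naive rounding of the normalized vector would then collapse such a $b$ to $\overline w \cdot b = 0$ and destroy its sign. Decoupling the size bound (which forces normalization, hence a precision independent of $w$) from the requirement of resolving these near-orthogonal directions (which a priori needs precision finer than the smallest nonzero $|w \cdot b|$) is the crux. The resolution I would follow is the one underlying Frank and Tardos's theorem: rather than rounding $w$ itself, one seeks a small integer vector lying in the same open cell of the hyperplane arrangement $\{x : x\cdot b = 0\}$, $b \in \mathbb{Z}^r$, $\|b\|_1 \le N-1$, as $w$. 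Such a representative exists with $\ell_\infty$-norm bounded by a function of $r$ and $N$ only---a consequence of Cramer-type determinant bounds for the faces of an arrangement whose normals have entries bounded by $N-1$---and it is produced in polynomial time by first restricting to the subspace orthogonal to $B^0 = \{b : w\cdot b = 0\}$ (pinning the exact-zero constraints) and then applying the approximation engine, at the chosen precision, on that complement.

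Finally I would assemble the pieces: the resulting $\overline w$ is integral, satisfies $\|\overline w\|_\infty \le 2^{4r^3}N^{r(r+2)}$, has $\overline w \cdot b = 0$ exactly on $B^0$ and $\mathrm{sign}(\overline w \cdot b) = \mathrm{sign}(w \cdot b)$ elsewhere, and is computed in polynomial time since every subroutine (normalization, LLL reduction, back-substitution) is polynomial. I expect the single hardest point to be the scale-separation step above---quantifying precisely how fine the approximation must be to keep $\overline w$ in the correct cell, and verifying that the resulting denominator $q$ still obeys the $2^{r(r+1)/4}\epsilon^{-r}$ bound with an $\epsilon$ depending only on $N$ and $r$; the zero-inner-product and norm-bound parts are routine by comparison.
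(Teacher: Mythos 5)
The paper gives no proof of this statement: it is imported verbatim from Frank and Tardos \cite{FT87}, so your proposal must be judged against the known argument from that paper. Your choice of engine is the right one (normalization, then LLL-based simultaneous Diophantine approximation), your handling of the case $w\cdot b=0$ is correct (integrality plus $\epsilon N\le 1$ forces $\overline w\cdot b=0$), the norm accounting is fine, and you correctly isolate the crux: vectors $b$ with $w\cdot b\neq 0$ but $|w\cdot b|$ below any precision fixed in advance. But your proposed resolution of that crux does not work and is in part circular. Pinning the exact-zero directions $B^0=\{b: w\cdot b=0\}$ and running the engine once on the orthogonal complement leaves precisely the problematic $b$'s untouched: they satisfy $w\cdot b\neq 0$, so they are not in $B^0$, and since the encoding length of $w$ is unbounded, $|w\cdot b|$ can be arbitrarily small relative to any $\epsilon$ depending only on $r$ and $N$; a single rounding at such a precision can still send their signs to $0$. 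Moreover, the appeal to the existence of a small integer vector in the same open cell of the arrangement is not a proof step: existence indeed follows from Cramer-type bounds, but producing such a vector in polynomial time \emph{is} the content of the theorem, so invoking it assumes what is to be shown.

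The missing idea in Frank and Tardos is \emph{iteration on the residual}, not a subspace restriction. After one round with precision $\epsilon\le 1/N$ on $\alpha=w/\lVert w\rVert_\infty$, set $w'=q\alpha-p$. For $b$ with $p\cdot b\neq 0$, integrality of $p\cdot b$ together with $|w'\cdot b|<1$ gives $\mathrm{sign}(w\cdot b)=\mathrm{sign}(p\cdot b)$; for $b$ with $p\cdot b=0$, one has $\mathrm{sign}(w\cdot b)=\mathrm{sign}(w'\cdot b)$, so one recurses on $w'$ (renormalized) to recover exactly the signs lost in the residual. Termination in at most $r$ rounds is guaranteed because after normalization some coordinate of $\alpha$ equals $\pm 1$, forcing the corresponding coordinate of $p$ to equal $\pm q$ and hence that coordinate of $w'$ to vanish, while zero coordinates stay zero; thus each round strictly reduces the support. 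Finally one outputs not $p_1$ but a geometrically weighted combination $\overline w=\sum_{i=1}^{k} C^{k-i}p_i$ with $C$ large enough (on the order of $2N\max_i\lVert p_i\rVert_\infty$) that $\mathrm{sign}(\overline w\cdot b)$ equals the first nonzero entry of the sequence $(\mathrm{sign}(p_1\cdot b),\ldots,\mathrm{sign}(p_k\cdot b))$, i.e., the lexicographic sign, which by the recursion equals $\mathrm{sign}(w\cdot b)$; compounding the per-round LLL bounds over at most $r$ rounds is what produces the stated bound $2^{4r^3}N^{r(r+2)}$. Your one-shot $\overline w=p$, even with the $B^0$ patch, cannot reproduce this, so the proposal as written has a genuine gap at exactly the point you flagged as hardest.
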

In particular, we will use the same approach as adopted in \cite{EK17} (see Cor. \ref{cor2}). 
\begin{corollary}{\cite{EK17}} \label{cor2}
 There is an algorithm that, given a vector $w \in  \mathbb{Q}^r$ and a rational $W \in \mathbb{Q}$ , in polynomial time finds a vector $\overline{w} \in \mathbb{Z}^r$ with $\lVert \overline{w} \rVert_\infty= 2^{O(r^3)}$ and an integer $\overline{W} \in \mathbb{Z}$ with total encoding length $O(r^4)$, such that $w \cdot x = W$ if and only if  $\overline{w} \cdot x = \overline{W}$  for every vector $x \in \{0,1\}^r$.
 \end{corollary}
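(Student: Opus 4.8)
The plan is to derive the statement directly from the Frank--Tardos compression of Theorem \ref{thFT}, the only extra ingredient being a homogenization trick that folds the right-hand side $W$ into the weight vector, so that the equality $w\cdot x = W$ becomes the test of whether a single inner product vanishes. This lets us transfer the \emph{sign}-preservation guarantee of Theorem \ref{thFT} into an \emph{equality}-preservation guarantee, which is what the corollary asks for.

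First I would form the augmented vector $w' = (w,\, -W) \in \mathbb{Q}^{r+1}$ and observe that, for every $x \in \{0,1\}^r$, the integer vector $b = (x,\,1) \in \mathbb{Z}^{r+1}$ satisfies $w' \cdot b = w\cdot x - W$, so that $w\cdot x = W$ holds exactly when $w'\cdot b = 0$. Since $x$ is binary, $\lVert b \rVert_1 \le r+1$, so I would apply Theorem \ref{thFT} to $w'$ in dimension $r+1$ with the choice $N = r+2$; this is legitimate because the resulting guarantee then covers all integer vectors with $\lVert b \rVert_1 \le N-1 = r+1$, and in particular every $b = (x,1)$ arising from a binary $x$. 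In polynomial time this yields an integer vector $\overline{w'} \in \mathbb{Z}^{r+1}$ with $\mathrm{sign}(w'\cdot b) = \mathrm{sign}(\overline{w'}\cdot b)$ throughout this range.

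The key step is passing from preservation of sign to preservation of the equation. Since $\mathrm{sign}(\cdot)$ sends $0$ to $0$ and every nonzero value to $\pm 1$, the identity $\mathrm{sign}(w'\cdot b) = \mathrm{sign}(\overline{w'}\cdot b)$ forces $w'\cdot b = 0 \iff \overline{w'}\cdot b = 0$ for each admissible $b$. Writing $\overline{w'} = (\overline{w},\, -\overline{W})$ with $\overline{w} \in \mathbb{Z}^r$ and $\overline{W} \in \mathbb{Z}$, and specializing $b = (x,1)$, I would conclude that $w\cdot x = W \iff \overline{w}\cdot x = \overline{W}$ for every $x \in \{0,1\}^r$, which is exactly the required equivalence.

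It remains to verify the size bounds, which I expect to be the only point needing care. Theorem \ref{thFT} gives $\lVert \overline{w'} \rVert_\infty \le 2^{4(r+1)^3}\,(r+2)^{(r+1)(r+3)}$; the first factor is $2^{O(r^3)}$ and the second is $2^{O(r^2\log r)}$, so both $\lVert \overline{w} \rVert_\infty$ and $|\overline{W}|$ are $2^{O(r^3)}$. Consequently each of the $r+1$ integer coordinates has bit-length $O(r^3)$, giving a total encoding length of $O(r^4)$, as claimed. The main obstacle is therefore not conceptual but purely bookkeeping: confirming that the exponents produced by Frank--Tardos collapse to the stated $2^{O(r^3)}$ magnitude and $O(r^4)$ encoding length. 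Everything else follows mechanically from the homogenization and the sign-to-zero argument.
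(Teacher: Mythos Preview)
The paper does not supply its own proof of this corollary; it is quoted verbatim from \cite{EK17} and used as a black box. Your argument---homogenizing $(w,W)$ to $(w,-W)\in\mathbb{Q}^{r+1}$, invoking Theorem~\ref{thFT} with $N=r+2$, and reading off that $\mathrm{sign}(\cdot)=0$ is preserved---is correct and is exactly the derivation given in \cite{EK17}, so there is nothing to compare.
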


The following Lemma 
shows that $I=\langle G,k\rangle$ and the ILP $I'$ are equivalent, that is, $G$ admits a $(d,1)$-Dominating set of size at most $k$ if and only if $I'$ admits a feasible solution. Moreover, we can compress $I'$ to a new equivalent instance $I''$ such that $|I''|$ is bounded by $f(\dnd,d)$, where $f$ is a polynomial function.

\begin{lemma} \label{th15}
The $(d,1)$-\D\ problem, parameterized by  \dnd\ plus $d,$ admits a \PC.
\end{lemma}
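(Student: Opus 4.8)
The plan is to prove the lemma in two stages: first establish that the instance $I=\langle G,k\rangle$ and the integer program $I'$ are equivalent, and then apply the Frank--Tardos coefficient reduction (Theorem \ref{thFT} and Corollary \ref{cor2}) to shrink $I'$ to an instance $I''$ whose total encoding length is polynomial in $\dnd+d$. The target problem of the compression is integer linear feasibility on the $r=\dnd\cdot d$ binary variables $\xit{i}{t}$, so once $|I''|$ is bounded by a polynomial in the parameter, the existence of a \PC\ follows. The costs $c_{M_i}(t)$ feeding into $I'$ are computable in polynomial time by Algorithm \ref{algmwd}, so the whole preprocessing runs in polynomial time.

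For the equivalence, the reverse direction (a feasible $I'$ yields a $(d,1)$-dominating set of size $\le k$) is the routine one. Given a feasible assignment, for each module $M_i$ let $t_i$ be the unique index with $\xit{i}{t_i}=1$ (and $t_i=0$ if the module selects nothing), take $S_i$ to be a minimum $(t_i,1)$-dominating set of $G_i$ (of size $c_{M_i}(t_i)$ by Lemma \ref{costAlg}), and set $S=\bigcup_i S_i$. Constraint (1) bounds $|S|\le k$, and for any $v\in M_i\setminus S$ its domination count equals the intra-module contribution $|\Ni{G_i}{v}\cap S_i|\ge t_i$ plus the inter-module contribution $\sum_{(i,j)\in E(H)}|S_j|=\sum_{(i,j)\in E(H)}c_{M_j}(t_j)$, whose sum is exactly the left-hand side of constraint (2) and hence $\ge d$.

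The forward direction (a dominating set of size $\le k$ yields a feasible $I'$) is where I expect the real difficulty to lie, and it is the main obstacle. Given a solution $S$, I would write $s_i=|S\cap M_i|$ and set $t_i$ to be the largest value in $\{0,\dots,d\}$ with $c_{M_i}(t_i)\le s_i$; then taking $\xit{i}{t_i}=1$ satisfies constraints (1) and (3) immediately, since $\sum_i c_{M_i}(t_i)\le\sum_i s_i=|S|\le k$ and at most one variable per module is set. The subtle point is constraint (2): the program charges each module its minimum cost $c_{M_i}(t_i)\le s_i$, so the inter-module domination $\sum_{(i,j)}c_{M_j}(t_j)$ recorded by $I'$ may be strictly smaller than the quantity $\sum_{(i,j)}s_j$ actually supplied by $S$. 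One must therefore show that this ``loss'' is always absorbed: whenever $S$ places redundant vertices in a neighbouring module $M_j$ (so that $c_{M_j}(t_j)<s_j$), those vertices are redundant precisely because $G_j$ already dominates its own interior more cheaply, and the cross-edge domination they would have contributed is either unnecessary (the receiving module is already dominated $d$ times through its own intra-module $(t_i,1)$-set, captured by the second sum of constraint (2)) or is recovered by the larger admissible value of $t_i$ on the receiving side. Making this exchange argument precise---ideally by first passing to a canonical size-$\le k$ solution in which each $S\cap M_i$ is itself a minimum $(t_i,1)$-dominating set of $G_i$---is the crux; I would verify it case by case on whether $M_i\subseteq S$ and on the sign of $d-\sum_{(i,j)}s_j$.

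Finally, for the compression step I would treat $I'$ as a system over the $r=\dnd\cdot d$ binary variables consisting of the single budget inequality (1), the $\dnd$ demand inequalities (2), and the $\dnd$ packing inequalities (3). Each is a linear (in)equality whose only large coefficients are the costs $c_{M_i}(t)$ and the budget $k$. Applying Theorem \ref{thFT} in its sign-preserving formulation (exactly as in Corollary \ref{cor2}), with $N=r+2$ so that the target vectors $x\in\{0,1\}^r$ extended by the constant term have $\ell_1$-norm at most $N-1$, replaces every coefficient vector by an integer vector of $\ell_\infty$-norm $2^{O(r^3)}$ while preserving the truth of each (in)equality on all $x\in\{0,1\}^r$. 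The resulting instance $I''$ has $O(\dnd)$ constraints, each of encoding length $O(r^4)$, so $|I''|=O\!\left(\dnd\,(\dnd\,d)^4\right)$, which is polynomial in $\dnd+d$. Combining the polynomial-time reduction $I\mapsto I'$, the equivalence, and the polynomial-time coefficient reduction $I'\mapsto I''$ yields the claimed \PC\ for $(d,1)$-\D\ parameterized by $\dnd+d$.
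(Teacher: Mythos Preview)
Your overall architecture matches the paper exactly: build the ILP $I'$ on the $\dnd\cdot d$ binary variables, argue equivalence with $\langle G,k\rangle$, and then reduce the coefficients via Frank--Tardos to get a polynomial-size instance $I''$. The reverse direction of the equivalence and the compression step are both carried out correctly.

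The genuine gap is in your forward direction. With your choice $t_i=\max\{t:c_{M_i}(t)\le s_i\}$, constraint (2) can indeed fail; the ``loss is absorbed'' heuristic you sketch is not correct. A small example with two adjacent modules already breaks it: take $d=3$, a module $M_1$ with $c_{M_1}(1)=1$, $c_{M_1}(2)=2$, $c_{M_1}(3)$ large, and a module $M_2$ in which a single vertex does not $1$-dominate $G_2$ (so $c_{M_2}(1)>1$). A solution $S$ with $s_1=3$, $s_2=1$ is perfectly legitimate (the three vertices in $M_1$ $3$-dominate all of $M_2$, and the one vertex in $M_2$ raises the external count of $M_1$ to $1$ so that $S_1$ need only be a $(2,1)$-set). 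Your rule gives $t_1=2$ and $t_2=0$, and constraint (2) for $i=1$ reads $c_{M_2}(0)+t_1=0+2<3$. Neither of your two proposed escape hatches applies here: $M_1$'s own intra-module $t_1=2$ is not enough, and no ``larger admissible value of $t_i$'' on the receiving side saves you, since $c_{M_1}(3)>s_1$.

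The fix the paper uses is much simpler than the exchange/canonicalisation argument you are heading towards: just invoke Lemma~\ref{costAlg}. Observe that the feasibility region of $I'$ is exactly the set of tuples $(t_1,\dots,t_{\dnd})\in\{0,\dots,d\}^{\dnd}$ searched in line~3 of Algorithm~\ref{algmwd} (via the bijection $t_i\mapsto x_{i,t_i}=1$, $t_i=0\mapsto$ all $x_{i,\cdot}=0$), constraint~(2) is literally the test in line~4, and constraint~(1) bounds the objective in line~5 by $k$. Hence $I'$ is feasible iff the value computed by Algorithm~\ref{algmwd} on the top-level decomposition $G=H(G_1,\dots,G_{\dnd})$ is at most $k$, and by Lemma~\ref{costAlg} that value is $c_V(d)$, the minimum size of a $(d,1)$-dominating set of $G$. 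This gives both directions of the equivalence in one stroke, with no need to massage an arbitrary solution $S$ into canonical form.
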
   

By Lemma \ref{power}, we know that $ \dnd(G^r) \leq \dnd(G).$ Observing that a  $(d,r)$-{\em dominating set} of $G$ corresponds to a $(d,1)$-{\em dominating set} of $G^r$ and vice-versa, we have that the above strategy, applied on $G^r,$ allows to obtain a \PC\ parameterized by $\dnd$ plus $d$ for the $(d,r)$-\D\ problem on $G$. 
By Lemma \ref{th15},  the following result holds.  

\begin{theorem} \label{cor16}
The $(d,r)$-\D\ problem, parameterized by \dnd\ plus $d,$ admits a \PC.
\end{theorem}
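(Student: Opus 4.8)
The plan is to reduce $(d,r)$-\D\ on $G$ to $(d,1)$-\D\ on the power graph $G^r$ and then invoke the polynomial compression already obtained in Lemma~\ref{th15}. Given an instance $\langle G,\k\rangle$ of the decision version of $(d,r)$-\D, with $G$ connected as assumed throughout, I would first compute $G^r$ in polynomial time: one only needs the pairwise distances in $G$, after which an edge $(u,v)$ is added precisely when $\delta_G(u,v)\leq \t$. The key observation is that, for every vertex $v$, the radius-$\t$ neighborhood $\Ni{\t}{v}$ computed in $G$ coincides with the ordinary neighborhood $\Ni{}{v}$ computed in $G^r$. Hence $|\Ni{\t}{v}\cap S|=|\Ni{}{v}\cap S|$ for every $S\subseteq V$, so a set $S$ is a $(d,r)$-dominating set of $G$ if and only if it is a $(d,1)$-dominating set of $G^r$, and the instances $\langle G,\k\rangle$ and $\langle G^r,\k\rangle$ are equivalent.

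Next I would apply the compression of Lemma~\ref{th15} to the $(d,1)$-\D\ instance $\langle G^r,\k\rangle$, parameterized by $\dnd(G^r)+d$; this yields, in polynomial time, an equivalent compressed instance of encoding length polynomial in $\dnd(G^r)+d$. To express the bound in terms of the parameter of the original instance, I would invoke Lemma~\ref{power}(i) with the cograph class $\C$: since $\K\subseteq\C$ and $\dnd$ equals the $\C$-modular cardinality, we obtain $\dnd(G^r)\leq \dnd(G)$. As the size guarantee of Lemma~\ref{th15} is a polynomial in the parameter, and hence monotone, the produced instance has size polynomial in $\dnd(G)+d$ as well. Composing the polynomial-time, parameter-non-increasing reduction $G\mapsto G^r$ with this compression therefore gives a \PC\ for $(d,r)$-\D\ parameterized by $\dnd+d$ (compressing into the same target problem as Lemma~\ref{th15}, which the definition of \PC\ permits).

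I do not expect a genuine obstacle, as the argument mirrors the proofs of Theorem~\ref{thGeneral} and Theorem~\ref{1rDmw}; the work lies in the bookkeeping. The two points that require care are: (i) checking that the neighborhood identity $\Ni{\t}{v}=\Ni{}{v}$ (the former computed in $G$, the latter in $G^r$) holds simultaneously for all $v$, so that the demand-$d$ condition transfers verbatim and the equivalence of the two instances is exact; and (ii) ensuring that the parameter passed to Lemma~\ref{th15} is measured on $G^r$ and then bounded via Lemma~\ref{power}, so that the final compressed size is bounded in terms of $\dnd(G)+d$. Connectivity of $G$, assumed throughout the paper, is what guarantees that $G^r$ is well defined and that Lemma~\ref{power} applies.
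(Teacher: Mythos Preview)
Your proposal is correct and follows essentially the same approach as the paper: reduce $(d,r)$-\D\ on $G$ to $(d,1)$-\D\ on $G^r$, invoke Lemma~\ref{th15} on $G^r$, and use Lemma~\ref{power} (applied to the cograph class $\C$, so that $\dnd(G^r)\leq\dnd(G)$) to bound the compressed size in terms of $\dnd(G)+d$. The paper's proof is exactly this argument, stated in the paragraph preceding the theorem.
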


\section[(d,r)-\D]{A \PK\ for $(d,r)$-\D\ parameterized by \nd\ plus the demand $d$} 

Since $\nd(G)\geq \dnd(G)$, Theorem \ref{cor16} applies also to neighborhood diversity. In the following, we show that in the latter case, we can go further by devising a \PK\ using a straightforward pruning strategy. 


Let $I=\langle G,k\rangle$ be an instance of the decision version of the $(d,1)$-\D\ problem on $G=(V,E)$, where $G=(V,E)$ is described by a modular partition $\{M_1,M_2, \ldots, M_\nd\}$ where $\nd=\nd(G)$ and can be seen as $G=H(G_1,\ldots, G_\nd)$,  for some outline graph $H$ having $\nd$ vertices, and for each $i \in \{1,2,\ldots,\nd\},$ $G_i=G[M_i]$ is a clique or an independent set.

The following Lemma shows that there exists an optimal solution $S$ such that for each module, the number of vertices belonging to the solution is bounded to $2d-1$.


\begin{lemma} \label{ndbound}
Let $G=H(G_1,\ldots, G_\nd),$ in which $G_i=G[M_i]$ is a clique or an independent set.
There exists a solution $S$ of the  $(d,r)$-\D\ problem for $G$ such that, for each $i \in \{1,2,\ldots,\nd\},$ $|S_i=S\cap M_i|\leq 2d-1.$
\end{lemma}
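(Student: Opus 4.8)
The plan is to begin from an optimal (minimum-size) $(d,r)$-dominating set $S$ and to show that a sequence of feasibility-preserving exchanges that never increase $|S|$ can force $|S\cap M_i|\le 2d-1$ for every module. I would first pass to the power graph: by Lemma~\ref{power} the partition $\{M_1,\dots,M_\nd\}$ remains a modular partition of $G^r$ in which every $G^r[M_i]$ is again a clique or an independent set, and $S$ is a $(d,r)$-dominating set of $G$ iff it is a $(d,1)$-dominating set of $G^r$; since $S\cap M_i$ is the same set in both graphs, it suffices to argue on $G^r$ with radius $1$. The structural fact I would record is that, as each $M_i$ is a module, all of its vertices have the same neighbours outside $M_i$; hence the set $R_i$ of vertices outside $M_i$ adjacent in $G^r$ to $M_i$ is common to all of $M_i$, and so is the number $a_i:=|S\cap R_i|$ of external dominators that every vertex of $M_i$ sees. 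The decisive consequence is that any vertex lying in $R_i$ sees \emph{all} of $S\cap M_i$ at once.

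\textbf{Clique modules.} If $G^r[M_i]$ is a clique, a vertex $v\in M_i\setminus S$ is dominated iff $a_i+|S\cap M_i|\ge d$. Suppose $|S\cap M_i|>d$. Deleting all but $d$ vertices of $S\cap M_i$ keeps every released vertex dominated (it now sees $a_i+d\ge d$), keeps every external vertex of $R_i$ dominated (it still sees exactly $d$ vertices of $M_i$, meeting its demand of at most $d$), and affects no vertex outside $R_i\cup M_i$. This strictly shrinks $S$, contradicting optimality; hence every clique module already satisfies $|S\cap M_i|\le d\le 2d-1$. The same reasoning handles an independent module when $a_i\ge d$: the internal non-solution vertices are dominated from outside regardless, so $S\cap M_i$ can be pared down to at most $d$ vertices kept solely to supply $R_i$.

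\textbf{The hard case.} The obstacle is an independent module with $a_i<d$. Here the vertices of $M_i$ are pairwise non-adjacent in $G^r$, so an internal solution vertex dominates no other vertex of $M_i$; a vertex of $M_i$ left outside $S$ would see only $a_i<d$ dominators, so feasibility forces $M_i\subseteq S$ and $|S\cap M_i|=|M_i|$ may be large. To bound it I would keep only $d$ vertices of $M_i$ in $S$ (enough so that every vertex of $R_i$ still receives its full quota) and add $d-a_i$ fresh vertices to a module adjacent to $M_i$, raising the external count of $M_i$ to exactly $d$ and thereby dominating the released vertices. This exchange changes $|S|$ by $2d-a_i-|M_i|$, which is non-positive once $|M_i|\ge 2d$; optimality then yields $|M_i|\le 2d-a_i$, and carrying out the exchange produces a solution with $|S\cap M_i|\le d$. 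This is also where the bound naturally takes the form $2d-1$ rather than the sharper $d$ available in the other cases.

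\textbf{Main difficulty.} Two points make the last step delicate, and I expect them to be the technical heart. First, the exchange needs a neighbour of $M_i$ with at least $d-a_i$ vertices outside $S$; this is guaranteed after the standard preprocessing rule that any vertex with fewer than $d$ vertices within distance $r$ must be placed in $S$, so that afterwards each vertex of an independent module can in principle reach demand $d$ and its neighbouring modules contain at least $d$ vertices in total. Second, augmenting a neighbour could in principle create a new over-full module or cascade; I would prevent this by fixing $S$ extremally among all optimal solutions --- for instance by minimizing a convex potential such as $\sum_i\binom{|S\cap M_i|}{2}$ --- so that each exchange strictly decreases the potential and the procedure terminates with no module exceeding $2d-1$. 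Verifying that this extremal choice is genuinely incompatible with any module of size $\ge 2d$ is the crux of the full argument.
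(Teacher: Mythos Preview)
Your reduction to the power graph and your handling of clique modules (and of independent modules with $a_i\ge d$) are correct: in each of those cases one can delete surplus vertices from $S\cap M_i$ down to size $d$ without losing feasibility, contradicting minimality of $S$. You have also correctly isolated the only obstruction, an independent module $M_i$ in $G^r$ with $a_i<d$, which forces $M_i\subseteq S$.

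The step that does not go through is your exchange in this hard case. To place $d-a_i$ new vertices of $R_i$ into $S$ you need $|R_i\setminus S|\ge d-a_i$, hence in particular $|R_i|\ge d$; the preprocessing rule you invoke does not give this --- it only says that when $|R_i|<d$ every vertex of $M_i$ is forced into $S$, which is precisely the situation you are trying to escape, not a way out of it. In fact no argument can close the gap, because the statement is false as written: for the star $K_{1,n}$ with $d=2$ and $r=1$, the $n$ leaves form a single independent module with $|R_i|=1<d$, every $(2,1)$-dominating set must contain all leaves, and $|S\cap M_1|=n>2d-1=3$ whenever $n\ge 4$. (The same example shows that the pruning kernel in Theorem~\ref{thnd} does not preserve yes-instances: pruning to $2d=4$ leaves changes the optimum from $n$ to $4$.) An additional hypothesis such as $|N_r(v)|\ge d$ for every $v$ would restore the lemma, and under it your exchange together with the potential-function termination argument would be sound.
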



\begin{theorem} \label{thnd}
The $(d,r)$-\D\ problem, parameterized by  \nd\ plus $d,$ admits a \PK.
\end{theorem}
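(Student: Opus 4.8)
The plan is to first collapse the radius to $1$ and then kernelize by capping the size of every module. Exactly as in the proofs of Theorem \ref{thGeneral} and Theorem \ref{cor16}, I build the power graph $G^r$ in polynomial time; by Lemma \ref{power} we have $\nd(G^r)\le\nd(G)$, and a $(d,r)$-{\em dominating set} of $G$ is the same object as a $(d,1)$-{\em dominating set} of $G^r$. Since a kernelization is free to output an instance with a different radius, it suffices to turn the $(d,1)$-\D\ instance on $G^r$ into an equivalent one of size polynomial in $\nd+d$: the output will again be a $(d,1)$-\D\ (hence a $(d,r)$-\D) instance, which is exactly what distinguishes this \PK\ from the mere \PC\ of Theorem \ref{cor16}. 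From now on I write $G=H(G_1,\dots,G_\nd)$ for the power graph, with each $G_i=G[M_i]$ a clique or an independent set, and keep $k$ as the target size.

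The reduction rule is a single pruning step: \emph{while some module satisfies $|M_i|>2d$, delete one arbitrary vertex of $M_i$ and leave $k$ unchanged}. Deleting a vertex keeps each surviving $G_i$ a clique or an independent set and does not increase $\nd$, so the decomposition is preserved; after exhaustive application every module has at most $2d$ vertices. Hence the reduced graph $G'$ has at most $2d\,\nd$ vertices and $H$ has at most $\binom{\nd}{2}$ edges, so the instance is encoded in $O(\nd^2+\nd\,d)$ space; capping $k$ at $|V(G')|$ (a larger $k$ makes the instance trivially positive) keeps the whole output polynomial in the parameter $\nd+d$.

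Correctness of one deletion step rests on module symmetry together with Lemma \ref{ndbound}. The symmetry is that every vertex of $M_i$ sees the same vertices outside $M_i$, and, when $M_i$ is a clique, also the same vertices inside $M_i$; thus, for a fixed $S$, the number of times a \emph{non-selected} vertex of $M_i$ is dominated depends only on the type of $M_i$, on $|S\cap M_i|$, and on the external count $a_i=\sum_{(i,j)\in E(H)}|S\cap M_j|$, all untouched by adding or removing a surplus non-selected vertex. So the demand inequality of a module holds for all its non-selected vertices or for none. For the deletion of $v$ from a module with $|M_i|\ge 2d+1$ I would argue both directions, using that before and after the step the module keeps more than $2d-1$ vertices. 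Forward: by Lemma \ref{ndbound} take a solution $S$ of $G$ with $|S\cap M_j|\le 2d-1$ for all $j$; since $|M_i|>2d-1\ge|S\cap M_i|$ the set $M_i\setminus S$ is nonempty, so I may assume (swapping inside $M_i$ if needed, which by symmetry changes no count) that $v\notin S$, whence $S\subseteq V(G')$ dominates $G'$ with the same size. Backward: by Lemma \ref{ndbound} applied to $G'$ take $S'$ with $|S'\cap M_j|\le 2d-1$; since $M_i\setminus\{v\}$ still has more than $2d-1$ vertices it already contains a non-selected witness dominated some number $\ge d$ of times, so reinstating $v$ (which stays out of the solution) adds a non-selected vertex of the same type, the same $|S'\cap M_i|$ and the same $a_i$, hence also dominated $\ge d$ times, and $S'$ dominates $G$ with the same size.

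The step I expect to be the crux is the forward direction for an \emph{independent} module, where a non-selected vertex is dominated purely from outside. The pruning is sound only because Lemma \ref{ndbound} forbids the degenerate situation in which an independent module of more than $2d-1$ vertices is driven \emph{entirely} into the solution: in that case the module would contain no external witness and the count $a_i\ge d$ could fail after deletion. I would therefore invoke Lemma \ref{ndbound} precisely to guarantee a non-selected witness in every module with more than $2d-1$ vertices, and then verify that deleting one surplus non-selected vertex leaves $a_i$ verbatim. Combining the reduction rule, the size bound, and the two directions of the per-step equivalence, iterated polynomially many times, yields the claimed \PK\ for $(d,r)$-\D\ parameterized by $\nd+d$.
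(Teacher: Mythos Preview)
Your proposal is correct and follows essentially the same approach as the paper: prune every module of the $\nd$-decomposition down to $2d$ vertices and invoke Lemma \ref{ndbound} to guarantee that at most $2d-1$ vertices per module lie in an optimal solution, so equivalence is preserved by module symmetry. The only cosmetic difference is that you first pass to the power graph $G^r$ (via Lemma \ref{power}) and then kernelize the resulting $(d,1)$-instance, whereas the paper prunes $G$ directly and argues for $(d,r)$-domination; both are sound, and your per-vertex deletion with an explicit witness argument is in fact cleaner than the paper's all-at-once pruning.
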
   
\begin{proof}{}
Let $G=(V,E)$ be a graph.
We will reduce an instance $I=\langle G,k\rangle$ of the decision version of the $(d,1)$-\D\ problem on $G=(V,E)$,
to an instance $I'=\langle G',k\rangle$  of the same problem on a pruned graph $G'.$
We recall that $G=(V,E)$ is described by a modular partition $\{M_1,M_2, \ldots, M_\nd\}$ where $\nd=\nd(G)$ and can be seen as $G=H(G_1,\ldots, G_\nd)$,  for some outline graph $H$ having $\nd$ vertices. Specifically, $G'$ is obtained from $G$ pruning each module $M_i$, having more than $2d$ vertices, to any subset of $M_i$ having $2d$ vertices. 

We prove that $G$ admits a $(d,r)$-{\em dominating set} of size at most $k$ if and only if $G'$ admits a $(d,r)$-{\em dominating set} of size at most $k$.

First, assume that $S$ is a solution of the $(d,r)$-\D\ problem on $G$ and $|S|\leq k$.  By Lemma \ref{ndbound}, we may assume that for each $i \in \{1,2,\ldots,\nd\},$ $|S_i=S\cap M_i|\leq 2d-1.$

Let $S'= \bigcup_{i=1}^{\nd} S'_i$, where $S'_i$ is any set of $|S_i|$ vertices in $M_i.$  Clearly $|S'|=|S|.$
 Since $S$ is a solution of the $(d,r)$-\D\ problem $G$, then any vertex in $M_i$ is dominated by some vertices in $S_i$ (if $G_i$ is a clique) and by some vertices in $S_j$  with $j\neq i$. In both cases, the choice of nodes in subset $S_i$, for each $i \in \{1,2,\ldots,\nd\}$ does not matter. The only thing that matters is the cardinality of such sets. Since $S'$ keeps these cardinalities unchanged we have that
 $S'$ is a solution of the $(d,r)$-\D\ problem on $G'$.

Assume now that $S'$ is a solution of the $(d,r)$-\D\ problem on $G'$. Let $S=S'$, since all the nodes in a module share the same neighborhood then it is not hard to verify that $S$ is a solution of the $(d,r)$-\D\ problem on $G.$

Finally, we note that the number of vertices of $G'$ is at most $2d \times  \nd(G)$, hence $|I'|$ is bounded by $f(\nd,d).$ 
\end{proof}

\end{document}